\date{22 August (4 September)  2019}
\def\co{\colon\thinspace}
\newtheorem{theorem}{Theorem}
\newtheorem{lemma}{Lemma}
\newtheorem{proposition}{Proposition}
\newtheorem{corollary}{Corollary}
\theoremstyle{definition}
\newtheorem{definition}{Definition}
\newtheorem{example}{Example}
\newtheorem{remark}{Remark}
\newcommand{\C}{\mathbb{C}}
\newcommand{\ZZ}{{\mathbb Z}_2}
\newcommand{\f}{{\varphi}}
\newcommand{\p}{\partial}
\newcommand{\g}{\gamma}
\newcommand{\der}[2]{{\frac{\partial {#1}}{\partial {#2}}}}
\DeclareMathOperator{\Ber}{Ber}
\DeclareMathOperator{\str}{str}
\DeclareMathOperator{\SpO}{SpO}
\DeclareMathOperator{\Spin}{Spin}
\DeclareMathOperator{\SO}{SO}
\DeclareMathOperator{\id}{id}
\newcommand{\lder}[2]{{\partial {#1}/\partial {#2}}}
\newcommand{\oder}[2]{{\frac{d {#1}}{d {#2}}}}
\newcommand{\RR}{\mathbb R}
\newcommand{\fun}{C^{\infty}}
\newcommand{\e}{{\varepsilon}}
\newcommand{\F}{{\Phi}}
\newcommand{\const}{\mathrm{const}}
\newcommand{\kt}{{\tilde k}}
\newcommand{\itt}{{\tilde\imath}}
\newcommand{\at}{{\tilde a}}
\newcommand{\bt}{{\tilde b}}
\newcommand{\qt}{{\tilde q}}
\renewcommand{\mod}{\,\mathrm{mod}\,}
\newcommand{\tto}{{\linethickness{2pt}
		  \,\begin{picture}(1,0)
                   \put(0,0.26){\line(1,0){0.95}}
                   \put(0,0){$\boldsymbol{\rightarrow}$}
                  \end{picture}
                  }\,
}
\newcommand{\oto}{{\linethickness{0.5pt}
		  \,\begin{picture}(1,0)
		  \put(0.07,0.175){\line(0,1){0.2}}
                   \put(-0.01,0){$\boldsymbol{\Rightarrow}$}
                  \end{picture}
                  }\,
}
\newcommand{\ttto}[1]{\stackrel{#1}{\vphantom{\rightrightarrows}\tto}}
\newcommand{\ttoh}{\tto_{\hbar}}
\newcommand{\Dbar}{{\mathchar '26 \mkern-10 mu D}}
\newcommand{\Hqu}{H^{\hbar}\Bigl(\overset{\,1}{x\vphantom{\frac{\hbar}{i}}},\overset{2}{\frac{\hbar}{i}\der{}{x}}\Bigr)}
\newcommand{\Sinf}{S_{\infty}}
\newcommand{\Linf}{L_{\infty}}
\DeclareMathOperator{\funn}{\mathbf{C^{\infty}}}
\DeclareMathOperator{\pfunn}{\mathbf{\Pi\!C^{\infty}}}
\title[Thick morphisms   and spinor representation]{Thick morphisms of supermanifolds, quantum mechanics, and spinor representation}
\author{Hovhannes~Khudaverdian}
\address{Department of Mathematics,  University of Manchester, Manchester,   UK}
\email{khudian@manchester.ac.uk}
\author{Theodore~Voronov}
\address{Department of Mathematics,  University of Manchester, Manchester,   UK}
\email{theodore.voronov@manchester.ac.uk}
\address{%
{Faculty of Physics, Tomsk State University, Tomsk, 634050, Russia}}
\begin{document}
\begin{abstract}
``Thick'' or ``microformal'' morphisms of supermanifolds generalize ordinary maps. They were discovered as a tool for homotopy algebras. Namely, the corresponding pullbacks provide   $L_{\infty}$-morphisms  for $S_{\infty}$  or Batalin--Vilkovisky algebras.  It was clear from the start that constructions used for thick morphisms closely resemble   some fundamental notions in quantum mechanics and their classical limits (such as action, Schr\"{o}dinger and Hamilton--Jacobi equations, etc.) There was also a natural question about any connection of thick morphisms with spinor representation. We answer both questions here.
We establish   relations  of thick morphisms with fundamental concepts of quantum mechanics. We also show that in the linear setup  quantum thick morphisms  with quadratic action give (a version of)   the   spinor representation   for a   certain category of canonical linear relations, which is an analog of the  Berezin--Neretin     representation and a generalization of the metaplectic representation (and ordinary spinor representation).
\end{abstract}

\maketitle
\tableofcontents

\section{Introduction and preliminaries}\label{sec.intro}

\subsection{Introduction}
Thick (microformal) morphisms of supermanifolds were discovered as a tool for constructing $L_{\infty}$-morphisms of homotopy algebras such as $S_{\infty}$- or $P_{\infty}$-algebras. Their quantum version  does the same for Batalin--Vilkovisky  algebras. See~\cite{tv:nonlinearpullback,tv:oscil,tv:microformal}, also~\cite{tv:gradedmicro} and ~\cite{tv:qumicro,tv:tangmicro}. (The starting point was a problem concerning higher Koszul brackets introduced in \cite{tv:higherpoisson}. See also~\cite{tv:highkosz}.) We recall  definitions and main facts in the next subsection.

Construction used for   
thick morphisms have remarkable resemblance with fundamental notions of classical and quantum mechanics such as classical and quantum action.  In this paper, we try to   show that it is more than just a resemblance.

The non-linear pullback of functions by a classical thick morphism, which is key for application to homotopy structures,  is, from the viewpoint of the ambient cotangent bundles,  action on functions on   Lagrangian submanifolds. Loosely, it is an ``action on functions of $n$ variables by a transformation of a $2n$-dimensional space''. This is literary true for the case of one manifold $M^n$: then  a   ``thick diffeomorphism'' $\F\co M\tto M$ can be put into a bijection with a (formal) canonical transformation of $T^*M$  and hence there is an action  on functions on $M$ of canonical transformations of $T^*M$. This strongly resembles spinor representation, if   one recalls that the spinor representation (in the orthogonal or symplectic settings) can be seen as action of linear transformations of a ``large'' space on objects such as functions or half-forms that live on a (half-dimensional) maximally isotropic subspace. (Details depend on a particular setting and some choices e.g. of a real or holomorphic realization.)

So it is natural to ask, as we did in~\cite{tv:microformal}, whether there is an actual link between thick morphisms and spinor representation. We are  able to give a positive answer here.

Namely, here we show that pullback  by \emph{quantum} thick morphisms can be seen as a generalization of spinor  representation\,---\,in the sense that for the special case of vector spaces, a particular class of quantum thick morphisms gives a projective representation  of a   category of canonical linear relations. It is close to the spinor representation introduced by Neretin~\cite{neretin:spinor1989,neretin:categories} (as a generalization of Berezin's construction~\cite{berezin:second}). The representation given by quantum thick morphisms differs from the Berezin--Neretin representation by a multiplier. For a single vector space, it gives, also up to a multiplier, the metaplectic (or Shale--Weil or symplectic spinor) representation. These multipliers correspond, roughly, to choices of ordering in the quantization construction. Since we do everything in the super setting, we  in fact obtain representations of super categories and supergroups such as the symplectic-orthogonal supergroup $\SpO$. The case of ordinary (orthogonal) spinors   corresponds to purely odd vector spaces. (Strictly speaking, we obtain pseudo-euclidian spinors for a particular signature, namely $(m,m)$, which admits real Lagrangian subspaces.) As for   non-linear pullbacks by classical thick morphisms, which came into being   for the needs of homotopy algebras and their $\Linf$-morphisms, they can be seen  morally  as  a  generalization of ``the classical limit  of  the spinor representation'' (for   $\hbar\to 0$). It is also worth pointing out that it was Fock's work~\cite{fock:canon1959-69} first treating the connection between canonical transformations of classical  and quantum mechanics (indicated heuristically by Dirac~\cite{dirac:pqm1947}).

The  structure of the paper is as follows. In the next subsection~\ref{subsec.main}, we recall  definitions and main facts concerning classical and quantum thick morphisms. In Section~\ref{sec.hamjac}, we consider one-parameter families of thick morphisms and show how Hamilton--Jacobi and Schr\"{o}dinger equations appear in this setting. In Section~\ref{sec.spin}, we establish   connection with the spinor representation.

Throughout the paper, we use   standard language of supergeometry. In many cases, we do not particularly emphasize that we consider super objects, very often referring e.g. to ``supermanifolds'' as just ``manifolds'', etc.

\subsection{Main notions}\label{subsec.main}
Let us recall the notions of classical and quantum thick morphisms. A note on terminology: the word ``thick'' is used because our constructions give a certain ``thickening'' (in several senses) of the ordinary smooth maps. Another adjective which is applied is ``microformal''~\cite{tv:microformal}, because of    using  cotangent bundles similarly to that in microlocal analysis and of the role played by formal power expansions.

\subsubsection{Classical thick morphisms}
Let $M_1$ and $M_2$ be two manifolds or supermanifolds. (Being ``super'' is not important for the main constructions, but becomes important when we turn to applications such as to homotopy algebras and, as will appear in this paper, to spinors.) We shall also work with their cotangent bundles $T^*M_1$ and $T^*M_2$.

A \emph{thick} (also known as \emph{microformal}) \emph{morphism} $\F\co M_1\tto M_2$ is a (particular type, formal) canonical relation in $T^*M_1 \times T^*M_2$ considered with the symplectic form which is the difference of the canonical symplectic forms on $T^*M_1$ and $T^*M_2$.

In contrast with the usual perception of relations as  generalized maps  (which in our case would be between $T^*M_1$ and $T^*M_2$),  in what follows we want to see a relation $\F$ as   a kind of ``mapping'' or ``morphism'' between the manifolds $M_1$ and $M_2$ themselves, not  their cotangent bundles. The key construction to be introduced shortly will be that of pullback of functions on $M_2$ by a thick morphism $\F\co M_1\tto M_2$ taking them to functions on $M_1$ and nonlinear (unlike the familiar ordinary pullback).

We require that a  thick morphism $\F\co M_1\tto M_2$ is specified in local coordinates on $M_1$ and $M_2$ by a \emph{generating function} $S$ of a special form: it is a function of position variables on the source manifold and momentum variables on the target manifold; with respect to the latter, it is a formal power series:
\begin{equation}\label{eq.S}
    S(x,q)=S^0(x)+\f^i(x)q_i +\frac{1}{2}\,S^{ij}(x)q_jq_i +\ldots
\end{equation}
(Because of that, a thick morphism is characterized as a ``formal'' canonical relation.)
The function $S(x,q)$ defines the relation by the formula
\begin{equation}\label{eq.reldif}
    q_idy^i-p_adx^a = d\bigl(y^iq_i-S(x,q)\bigr)
\end{equation}
or
\begin{equation}\label{eq.relder}
    p_a=\der{S}{x^a}(x,q)\,, \quad y^{i}=(-1)^{\itt}\der{S}{q_i}(x,q)\,.
\end{equation}
Throughout the paper we will be using the following notations: $x^a$ and $y^i$  for coordinates on $M_1$ and $M_2$ and $p_a$ and $q_i$ for the corresponding canonically conjugate momenta (i.e. fiber coordinates in $T^*M_1$ and $T^*M_2$). Warning: $q_i$ is used for a momentum variable (on $M_2$), not a position variable! When working with supermanifolds, we use the tilde for denoting   parity ($\ZZ$-grading) of an object; in particular, tensor indices carry the parities of the corresponding coordinates (e.g. $\itt=\tilde y^i=\text{parity}(y^i)$).

The function $S$ is regarded as part of structure (so strictly speaking, a thick morphism is more than just a relation; it is a ``framed'' relation endowed with a choice of integration constant contained in $S$). It is clear that generating functions such as $S(x,q)$ are coordinate-dependent; their transformation law is given in~\cite{tv:microformal} and it is such that the canonical relation defined by an $S$ does not depend on a choice of coordinates. (Being a formal power series in $q$ is essential for that.) An example of a thick morphism and its generating function $S(x,q)$ is the thick morphism $\F$ specified by a function $S$ of the form $S=\f^i(x)q_i$. It corresponds to an ordinary map $\f\co M_1\to M_2$ and $\F$ is its lifting to the cotangent  bundles (so that $p_a=\der{\f^i}{x^a}(x)q_i$). (Note that $\F$ is   not a map.) Adding more terms in the expansion~\eqref{eq.S} gives a generalization of ordinary maps. The crucial construction is the following.

\begin{definition}[Pullback by a thick morphism]
For a thick morphism $\F\co M_1\tto M_2$ with a generating function $S=S(x,q)$, the \emph{pullback} of functions is defined by the formula:
\begin{equation}\label{eq.phistar}
     \boxed{\quad f(x)=g(y) + S(x,q) - y^iq_i\,, \quad\vphantom{\int_0^1} }
\end{equation}
where $q_i$ and $y^i$ are determined from the equations
\begin{equation}\label{eq.phistarq}
    q_i=\der{g}{y^i}\,(y)
\end{equation}
and
\begin{equation}\label{eq.phistary}
    y^i=(-1)^{\itt}\,\der{S}{q_i}(x,q)\,.
\end{equation}
So a function $g(y)$ on $M_2$ is mapped to a function $f(x)$ on $M_1$, and we write $f=\F^*[g]$.
\end{definition}

\begin{remark} Equations~\eqref{eq.phistarq} and~\eqref{eq.phistary} are coupled; if we substitute~\eqref{eq.phistarq} into~\eqref{eq.phistary}, we obtain
\begin{equation}\label{eq.phig}
    y^i=(-1)^{\itt}\,\der{S}{q_i}\Bigl(x,\der{g}{y}(y)\Bigr)
\end{equation}
which is then solved by iterations (see~\cite{tv:nonlinearpullback}) giving a formal perturbation $\f_g\co M_1\to M_2$ of the ordinary map $\f\co M_1\to M_2$ defined by the second term in~\eqref{eq.S} and depending on a function $g$ as a ``small parameter''\,:
\begin{equation}\label{eq.phigexpl}
    y^i=\f^i(x) + S^{ij}(x)\p_ig(\f(x)) +\ldots
\end{equation}
This is substituted into~\eqref{eq.phistarq} and then in~\eqref{eq.phistar}. This results in a nonlinear (in general) dependence of the function $f$ on a function $g$. Hence the pullback
\begin{equation}\label{eq.phistar1}
    \Phi^*\co \funn(M_2) \to \funn(M_1)\,,
 \end{equation}
is a nonlinear formal mapping of infinite-dimensional manifolds of functions. (In the supercase, one should care about parities. The above construction makes sense only for even functions or `bosonic fields'.)
\end{remark}

\begin{example} \label{ex.classordpull}
If $S=\f^i(x)q_i$, so $\F$ corresponds to an ordinary map $\f\co M_1\to M_2$, then one can check that formula~\eqref{eq.phistar} gives
\begin{equation}
    \F^*[g](x)=g\bigl(\f(x)\bigr)\,,
\end{equation}
i.e. $\F^*=\f^*$ is the ordinary pullback (in particular, linear). (Indeed, equations~\eqref{eq.phistarq} and~\eqref{eq.phistary} here decouple and give    $y^i=\f^i(x)$, so $\F^*[g](x)=g(y)+\f^i(x)q_i-y^iq_i=g(y)=g(\f(x))$.)
\end{example}
\begin{example}
For a general $S(x,q)$ with an expansion~\eqref{eq.S}, one obtains
\begin{equation}
    \Phi^*[g](x)= S^0(x)+ g\bigl(\f(x)\bigr) + \frac{1}{2}\,S^{ij}(x)\,\p_i g\bigl(\f(x)\bigr)\p_j g\bigl(\f(x)\bigr)
    + \ldots \,.
\end{equation}
with higher terms involving higher derivatives of $g$.
\end{example}

Although there is no known closed formula for pullbacks $\F^*$, it is possible from  the main construction~\eqref{eq.phistar} to obtain various remarkable properties~\cite{tv:nonlinearpullback, tv:microformal}. They include:  the formula for the \textbf{derivative} of the pullback:
\begin{equation}\label{eq.derivative}
    T\F^*=\f_g^*\,,
\end{equation}
where $\f_g\co M_1\to M_2$ is the perturbed map described above; and the \textbf{composition formula}
\begin{equation}\label{eq.functor}
    (\F_{32}\circ \F_{21})^*=\F_{21}^*\circ \F_{32}^*
\end{equation}
for thick morphisms $\F_{21}\co M_1\tto M_2$ and $\F_{32}\co M_2\tto M_3$, where the composition of the pullbacks $\F_{21}^*$ and $\F_{32}^*$  is the composition of (formal) mappings, while the \emph{composition of thick morphisms} is defined by the formula:
\begin{equation}\label{eq.compphi}
     \boxed{\quad S_{31}(x,r)=S_{32}(y,r) + S_{21}(x,q) - y^iq_i\,, \quad\vphantom{\int_0^1} }
\end{equation}
where $q_i$ and $y^i$ are determined from the equations
\begin{equation}\label{eq.compphiq}
    q_i=\der{S_{32}}{y^i}\,(y,r)
\end{equation}
and
\begin{equation}\label{eq.compphiy}
    y^i=(-1)^{\itt}\,\der{S_{21}}{q_i}(x,q)\,.
\end{equation}
Here  $S_{32}(y,r)$ and $S_{21}(x,q)$ are generating functions of $\F_{32}$ and $\F_{21}$, and $S_{31}(x,r)$ is by the definition the generating function of the composition  $\F_{32}\circ \F_{21}\co M_1\tto M_3$.\footnote{Formula~\eqref{eq.compphi} agrees with the usual set-theoretic composition of relations.} (The notation used is $x^a,p_a$ and $y^i,q_i$ as before, and $z^{\mu},r_{\mu}$ for position and momentum variables for $M_3$.)

\begin{remark} As already mentioned, there are parallel constructions  for odd functions. They   use the anticotangent bundles $\Pi T^*M$ instead of cotangent bundles and odd generating functions. (Above, the generating functions are even.)    ``Odd thick morphisms'' $\Psi\co M_1\oto M_2$ induce nonlinear pullbacks of odd functions or `fermionic fields' $\Psi^*\co \pfunn(M_2)\to \pfunn(M_1)$. They have many properties similar to the thick morphisms described above (which may be branded ``even''), except for one: they are lacking a ``quantum'' counterpart (see below).
\end{remark}

\subsubsection{Quantum thick morphisms}
It turns out that there is a certain ``quantum version'' of (even) thick morphisms  $\F\co M_1\tto M_2$. To distinguish, we shall call the latter ``classical''. ``Quantum thick morphisms'' are defined via the corresponding pullbacks, which are introduced first  and    quantum thick morphisms as such are then  arrows in the dual category.

As for classical thick morphisms, quantum thick morphisms are specified by their generating functions. In local coordinates they look the same, $S=S(x,q)$, but now they may depend on Planck's constant $\hbar$ (as a formal power series) and they will have a different transformation law under a change of coordinates. For   distinction, we refer to them as \emph{quantum generating functions}.

Functions on which pullbacks by quantum thick morphisms will act are \emph{oscillatory wave functions}, i.e. linear combinations of formal exponentials such as $w(x)=\sum A(x)e^{\frac{i}{\hbar}f(x)}$ where both the phase $f(x)$ and the amplitude $A(x)$ are formal power series in $\hbar$ (with nonnegative powers only) and natural rules of manipulation with such expressions are assumed. Coefficients of these power expansions in $\hbar$ are smooth functions on our manifolds (unless otherwise is stated).

\begin{definition}[Pullback by a quantum thick morphism]
The \emph{pullback} $\hat\F^*$ by a \emph{quantum thick morphism} $\hat \F\co M_1\ttoh M_2$ specified by a quantum generating function $S(x,q)$ is the following integral operator
\begin{equation}\label{eq.phihat}
    \boxed{\quad (\hat\F^* w)(x)= \int_{T^*M_2} \Dbar(y,q) \,\, e^{\frac{i}{\hbar}\left(S(x,q)-y^iq_i\right)}\,w(y)\,,\quad\vphantom{\int\limits_0^1}}
\end{equation}
mapping oscillatory wave functions $w(y)$ on $M_2$ to oscillatory wave functions $u(x)$  on $M_1$.
Integration in~\eqref{eq.phihat} is with respect to the normalized Liouville measure  on $T^*M_2$, so that  the notation $\Dbar(y,q)=Dy \Dbar q$  means that no numerical factors would appear from the $\hbar$-Fourier transform.
\end{definition}

We  stress that   an integral operator $\hat\F^*$ is   a primary object,  but it is  considered   as corresponding   to an arrow $\hat \F\co M_1\ttoh M_2$   in the opposite direction introduced formally.

The transformation law for a quantum generating function $S(x,q)$ is such that the integral~\eqref{eq.phihat} is invariant under a  change  of coordinates.  One can see that it is different from the transformation law for classical generating functions (but the difference is of order $\hbar$).

\begin{example}
Consider $S(x,q)=\f^i(x)q_i$. Then
\begin{equation}\label{eq.ordpullback}
    (\hat\F^* w)(x)= \int_{T^*M_2} \Dbar(y,q) \,\, e^{\frac{i}{\hbar}\left((\f^i(x) -y^i)q_i\right)}\,w(y)=
    \int_{M_2}  Dy \,\delta(\f(x)-y)\,w(y) = w(\f(x))\,,
\end{equation}
so we have the ordinary pullback by a usual map $\f\co M_1\to M_2$. (Compare the same for classical thick morphisms in Example~\ref{ex.classordpull}.)
\end{example}

The general form of a quantum pullback can be seen from the following proposition. We can write
\begin{equation}\label{eq.splus}
    S(x,q)=S^0(x)+\f^i(x)q_i +S^{+}(x,q)\,,
\end{equation}
where $S^{+}(x,q)$ contains all terms of order $\geq 2$ in $q$.
\begin{proposition} The operator $\hat \F^*$ corresponding to $S(x,q)$ as in~\eqref{eq.splus} can be written explicitly as
\begin{equation}\label{eq.Phiexplic}
     \bigl(\hat \Phi^*w\bigr)(x)=e^{\frac{i}{\hbar}S^0(x)}
    \left(e^{\frac{i}{\hbar}S^{+}\left(x,\frac{\hbar}{i}\der{}{y}\right)}w(y)\right)_{\left|\vphantom{\int\limits_a^b}\ y^i=\f^i(x)\right.}\,.
\end{equation}
\end{proposition}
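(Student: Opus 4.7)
The plan is to start from the defining integral~\eqref{eq.phihat}, substitute the decomposition~\eqref{eq.splus}, and then ``push'' the higher-order factor $e^{(i/\hbar)S^+(x,q)}$ off the oscillatory exponential and onto the wave function $w(y)$ by Fourier-type integration by parts in $y$. Pulling the $x$-only factor $e^{(i/\hbar)S^0(x)}$ outside the integral is immediate, leaving
\begin{equation*}
    (\hat\F^*w)(x)=e^{\frac{i}{\hbar}S^0(x)}\int_{T^*M_2}\Dbar(y,q)\,e^{\frac{i}{\hbar}(\f^i(x)-y^i)q_i}\,e^{\frac{i}{\hbar}S^+(x,q)}\,w(y).
\end{equation*}

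The identity I would rely on is, up to a super sign depending on $\itt$,
\begin{equation*}
    q_i\,e^{\frac{i}{\hbar}(\f^j(x)-y^j)q_j}\;=\;\pm\frac{\hbar}{i}\,\der{}{y^i}\,e^{\frac{i}{\hbar}(\f^j(x)-y^j)q_j},
\end{equation*}
which, after integration by parts in $y^i$, converts each factor of $q_i$ into the operator $(\hbar/i)\p/\p y^i$ acting on $w$. Because $S^+$ starts in order $2$ in $q$, its formal exponential $e^{(i/\hbar)S^+(x,q)}$ is a well-defined formal series in $\hbar$, each term of which is a $q$-polynomial of bounded degree; applying the identity term by term therefore replaces this factor inside the integral by the differential operator $e^{(i/\hbar)S^+(x,(\hbar/i)\p/\p y)}$ acting on $w(y)$.

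At this stage the only remaining $q$-dependence lies in the exponential $e^{(i/\hbar)(\f^i(x)-y^i)q_i}$, and its integration against $\Dbar q$ is precisely the delta-function $\delta(\f(x)-y)$\,---\,this is exactly how the normalization of $\Dbar$ was fixed, as was already used in the computation~\eqref{eq.ordpullback}. The subsequent $y$-integration then merely evaluates the bracketed quantity at $y=\f(x)$, giving~\eqref{eq.Phiexplic}. As a sanity check, setting $S^+=0$ recovers the ordinary pullback of~\eqref{eq.ordpullback}, and purely polynomial $S^+$ of any fixed degree can be checked directly by expanding both sides and matching terms.

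The step I expect to be the main obstacle is not the overall strategy\,---\,which is essentially Fourier analysis\,---\,but the careful bookkeeping of signs arising from permuting the momenta $q_i$ past each other and past the position variables $y^j$, and from integration by parts in odd variables; this is where the parity $\itt$ enters the formulas and where care is needed to ensure the operator $S^+(x,(\hbar/i)\p/\p y)$ is formed in the correct normal order. A secondary, routinely handled point is formal well-definedness: since $S^+$ is at least quadratic in $q$, each application of $(i/\hbar)S^+(x,(\hbar/i)\p/\p y)$ gains at least one positive power of $\hbar$, so its exponential is a well-defined formal series in $\hbar$ with each coefficient a genuine polynomial differential operator, compatible with the class of oscillatory wave functions on which $\hat\F^*$ is meant to act.
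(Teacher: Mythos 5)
Your proof is correct, and your approach is essentially the only natural one; the paper in fact states this proposition without proof (it is a citation from earlier work), so there is no competing argument in the text to compare against. One streamlining worth mentioning: you can avoid integration by parts entirely\,---\,and with it the sign bookkeeping you were rightly wary of\,---\,by running the same Fourier calculus in the opposite direction. Start from the right-hand side of~\eqref{eq.Phiexplic} and insert the $\hbar$-Fourier representation
\begin{equation*}
w(y)=\int \Dbar(z,q)\,e^{\frac{i}{\hbar}(y^j-z^j)q_j}\,w(z)\,,
\end{equation*}
whose normalization is exactly the one fixed by the computation~\eqref{eq.ordpullback}. Each operator $\frac{\hbar}{i}\der{}{y^i}$ then brings down a factor $q_i$ directly, so $e^{\frac{i}{\hbar}S^{+}(x,\frac{\hbar}{i}\partial/\partial y)}$ produces $e^{\frac{i}{\hbar}S^{+}(x,q)}$ under the integral. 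Setting $y^i=\f^i(x)$ and multiplying by $e^{\frac{i}{\hbar}S^0(x)}$ reassembles the exponent $\frac{i}{\hbar}(S(x,q)-z^jq_j)$, and one reads off the defining integral~\eqref{eq.phihat}. This makes manifest that no ordering issue arises: each $q_i$ is produced by a derivative already in the ``$x$ to the left, $q$ to the right'' order in which $S^{+}(x,q)$ is written, and the formal well-definedness you correctly noted (each $q\mapsto\frac{\hbar}{i}\partial_y$ carries a positive power of $\hbar$ since $S^{+}$ is at least quadratic in $q$) is unchanged.
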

(This is a formal $\hbar$-differential operator of infinite order along a map $\f\co M_1\to M_2$.)

The main properties of quantum thick morphisms are the following. Every quantum thick morphism $\hat \F$ with a generating function $S(x,q)$ has a  \textbf{classical limit}  $\F=\lim\limits_{\hbar\to 0}\hat \F$, which is the (classical) thick morphism with the generating function $S^{\text{class}}(x,q)$ obtained by setting $\hbar=0$ in $S(x,q)$, so that for an oscillatory wave function of the form $w(y)=e^{\frac{i}{\hbar}g(y)}$,
\begin{equation}\label{eq.classlim}
    \hat\F^*[e^{\frac{i}{\hbar}g(y)}](x) = e^{\frac{i}{\hbar}f(x)}\,,
\end{equation}
where
\begin{equation}\label{eq.classlimf}
    f(x)=\F^*[g](x)\,(1+O(\hbar))
\end{equation}
(with the pullback by the classical thick morphism in the right-hand side). Similarly to the classical case, holds the formula $(\hat\F_{32}\circ \hat\F_{21})^*=\hat\F_{21}^*\circ \hat\F_{32}^*$, where the composition of two quantum thick morphisms with generating functions $S_{32}(y,r)$ and $S_{21}(x,q)$ is the quantum thick morphism  with the generating function $S_{31}(x,r)$ defined by the equation
\begin{equation}\label{eq.qucomp}
    \boxed{\quad e^{\frac{i}{\hbar}S_{31}(x,r)}=
    \int_{T^*M_2} \Dbar(y,q) \,\,
    e^{\frac{i}{\hbar}\left(S_{32}(y,r) + S_{21}(x,q) - y^iq_i\right)}\,.\quad\vphantom{\int\limits_0^1}}
\end{equation}
By the stationary phase formula (see the Appendix in~\cite{tv:microformal}), one can obtain that the ``classical'' composition~\eqref{eq.compphi} is the limit of the ``quantum'' composition~\eqref{eq.qucomp}.

\begin{remark}
Integrals similar to~\eqref{eq.phihat} appeared first in the context of quantum mechanics in the work by Fock~\cite{fock:canon1959-69}  and independently in PDE theory in the works of several people, notably Egorov~\cite{egorov:canonicpdo1971} and Fedoryuk~\cite{fedoryuk:pdo1971}, that paved way for H\"{o}rmander's general theory of Fourier integral operators~\cite{hoer:fio1-1971}. Our  pullbacks by quantum thick morphisms can be regarded as   (special type) ``$\hbar$-Fourier integral operators''.
\end{remark}

\section{Infinitesimal calculus for thick morphisms}\label{sec.hamjac}

We   concentrate here on \emph{even} thick morphisms, though some statements carry over to odd thick morphisms as well.

\subsection{Classical thick diffeomorphisms and Hamilton--Jacobi equation}

Let $M$ be a manifold or supermanifold (the distinction plays no role here). We shall refer to a thick morphism $\F\co M\tto M$ as   a \emph{thick diffeomorphism} if $\F$ is invertible.

We consider the following situation. Let $\F_t\co M\tto M$, where $t\in\RR$, be a $1$-parameter group of thick diffeomorphisms, i.e. $\F_{t+s}=\F_t\circ \F_s$, $\F_0$ is the identity and $\F_{-t}=(\F_t)^{-1}$. As in the usual case,   thick diffeomorphisms $\F_t$ may be defined only on some open $U\subset M$, but this would make no difference for our analysis. Let $S_t=S_t(x,q)$ be the generating function of $\F_t$ in some coordinates on $M$. (Note that $x^a$ are local coordinates of a point $x$ of $M$, while $q_a$ are the components of a momentum at some other point $y$ of $M$.) We want to study the evolution of $S_t$ as well as that of the pullback by $\F_t$ of a function $f(x)$.

Consider first an infinitesimal thick diffeomorphism $\F_{\e}$, $\e^2=0$. Since $\F_0=\id$, we have
\begin{equation}\label{eq.s0}
    S_0(x,q)=x^aq_a\,,
\end{equation}
the generating function of the identity map. Here $q_a$ are momenta at the same point $x=y$. For an infinitesimal thick diffeomorphism we therefore have
\begin{equation}\label{eq.seps}
    S_{\e}(x,q)=x^aq_a+\e H(x,q)\,.
\end{equation}
One can observe that here $q_a$ are  momenta at a point $y$ which is in the infinitesimal ``$\e$-neighborhood'' of $x$, so in the argument of $H(x,q)$ because of the presence of the factor $\e$ they can be seen as being   at the original point $x$. Hence $H=H(x,p)$ is a genuine function on $T^*M$, i.e. a Hamiltonian on $M$.
\begin{lemma}\label{lem.infinites}
For a function $f\in \fun(M)$, its infinitesimal pullback by $\F_{\e}$ is given by
\begin{equation}\label{eq.phistareps}
    \F^*_{\e}[f](x)=f(x)+\e H\bigl(x,\der{f}{x}\bigr)\,.
\end{equation}
\end{lemma}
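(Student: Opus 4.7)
The plan is to substitute $S_{\e}(x,q)=x^aq_a+\e H(x,q)$ directly into the defining formulas~\eqref{eq.phistar}--\eqref{eq.phistary} for the pullback and expand everything to first order in $\e$, exploiting $\e^2=0$.

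First I would extract the intermediate variables $y$ and $q$ to the precision we need. Since $S_0(x,q)=x^aq_a$ generates the identity morphism (it already gives $y=x$ and $p=q$ via~\eqref{eq.relder}), equation~\eqref{eq.phistary} applied to $S_{\e}$ yields $y^i=x^i+O(\e)$, and then~\eqref{eq.phistarq} with $g=f$ gives $q_i=\der{f}{x^i}(x)+O(\e)$. I would not bother computing the explicit $O(\e)$ corrections here, because they will always be multiplied by quantities that are themselves $O(\e)$ in the final expression, and hence will drop out.

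Next I would assemble the pullback formula~\eqref{eq.phistar}:
\begin{equation*}
\F_{\e}^*[f](x)=f(y)+S_{\e}(x,q)-y^iq_i=f(y)+(x^i-y^i)q_i+\e H(x,q).
\end{equation*}
The crucial observation is a cancellation: Taylor-expanding $f(y)$ around $x$, the linear-in-$(y-x)$ term pairs with $(x^i-y^i)q_i$, and since $q_i$ differs from $\der{f}{x^i}(x)$ by $O(\e)$ while $y^i-x^i$ is itself $O(\e)$, the combination is $O(\e^2)$ and drops out, leaving $f(y)+(x^i-y^i)q_i=f(x)$ to the order we care about. In the remaining term $\e H(x,q)$, replacing $q_i$ by $\der{f}{x^i}(x)$ costs only $O(\e^2)$, producing exactly the formula~\eqref{eq.phistareps}.

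The only real obstacle is sign bookkeeping in the super setting: the factor $(-1)^{\itt}$ in~\eqref{eq.phistary} and the ordering conventions inside $x^aq_a$ and $y^iq_i$ must be respected, and the same applies to the Taylor expansion of $f(y)$. However, since $S_0=x^aq_a$ is already known to generate the identity morphism, the zeroth-order signs are forced to be consistent, and the linear perturbation $\e H$ introduces no new sign ambiguities. So the argument reduces to a first-order Taylor expansion packaged within the implicit-function structure of~\eqref{eq.phig}, with the parity signs riding along for free.
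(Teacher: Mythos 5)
Your proposal is correct and takes essentially the same route as the paper: substitute $S_\e = x^aq_a + \e H(x,q)$ into the pullback definition, Taylor-expand $f(y)$ about $x$, observe the $O(\e^2)$ cancellation $f(y)+(x^i-y^i)q_i = f(x)+O(\e^2)$, and then replace $q$ by $\der{f}{x}(x)$ inside the $O(\e)$ term. The only cosmetic difference is that the paper writes $y^a = x^a + \e\der{H}{p_a}(x,q)$ explicitly before invoking the same cancellation, whereas you argue directly by order-counting in $\e$; the substance is identical.
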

(Note that $f$ in the Lemma must be even, if we consider the super case.)
\begin{proof}   We have, by the definition,
\begin{equation*}
    \F^*_{\e}[f](x)= f(y)+ x^aq_a+\e H(x,q) -y^aq_a\,,
\end{equation*}
where
\begin{equation*}
    y^a=(-1)^{\at}\der{S}{q_a}(x,q)=x^a+\e \der{H}{p}(x,q)\,, \quad q_a=\der{f}{x^a}(y)\,,
\end{equation*}
so in particular $y^a-x^a\sim \e$. Hence in the formula we have
\begin{equation*}
    f(y)= f\bigl(x+\e \der{H}{p}(x,q)\bigr)=f(x) +\e \der{H}{p_a}\bigl(x,q\bigr)q_a=f(x) +\e (y^a-x^a)q_a\,,
\end{equation*}
so
\begin{equation*}
    \F^*_{\e}[f](x)= f(x) +\e (y^a-x^a)q_a+ x^aq_a+\e H(x,q) -y^aq_a=f(x) +\e H\bigl(x,\der{f}{x}(x)\bigr)\,,
\end{equation*}
as claimed.
\end{proof}

We   refer to the Hamiltonian $H\in\fun(T^*M)$ as the \emph{generator} of a  $1$-parameter group of thick diffeomorphisms $\F_t$. Since a Hamiltonian such as $H$ generates also a Hamiltonian flow, i.e. a $1$-parameter group of canonical transformations $F_t\co T^*M\to T^*M$, we   need to clarify a relation between $\F_t$ and $F_t$. This is done further below (see Proposition~\ref{prop.Stasintegral} and the discussion around it).

Lemma~\ref{lem.infinites} leads to a description of a  $1$-parameter group of thick diffeomorphisms $\F_t$ with a generator $H$ in terms of differential equations. They are   Hamilton--Jacobi equations  with the Hamiltonian $H$.

\begin{theorem} \label{thm.hamjacft}
Let $f_t:=\F_t^*[f_0]$ for some  initial  function $f_0=f(x)$. Then $f_t=f_t(x)$ satisfies the differential equation
\begin{equation}\label{eq.derft}
    \der{f_t}{t}=H\Bigl(x,\der{f_t}{x}\,\Bigr)\,.
\end{equation}
\end{theorem}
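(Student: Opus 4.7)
The natural approach is to exploit the one-parameter group property together with the composition formula for pullbacks \eqref{eq.functor}, reducing the entire assertion to the infinitesimal Lemma~\ref{lem.infinites}.

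\medskip

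First I would use the relation $\F_{t+\e}=\F_\e\circ\F_t$ (valid for $\e^2=0$ since $\F_t$ is a one-parameter group). By the contravariant composition formula \eqref{eq.functor}, this gives
\begin{equation*}
    f_{t+\e}=\F_{t+\e}^*[f_0]=(\F_\e\circ\F_t)^*[f_0]=\F_t^*\bigl[\F_\e^*[f_0]\bigr].
\end{equation*}
However, it is more convenient to use the commuting factorization $\F_{t+\e}=\F_t\circ\F_\e$, which yields
\begin{equation*}
    f_{t+\e}=\F_\e^*\bigl[\F_t^*[f_0]\bigr]=\F_\e^*[f_t].
\end{equation*}
At this point, $f_t$ is (for fixed $t$) just an even function on $M$, so Lemma~\ref{lem.infinites} applies directly with $f$ replaced by $f_t$, giving
\begin{equation*}
    f_{t+\e}(x)=\F_\e^*[f_t](x)=f_t(x)+\e\,H\Bigl(x,\der{f_t}{x}(x)\Bigr).
\end{equation*}
Comparing the coefficient of $\e$ on both sides yields exactly \eqref{eq.derft}.

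\medskip

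There is essentially no obstacle: the construction is algebraic and mirrors the classical derivation of the Hamilton--Jacobi equation from the generator of a Hamiltonian flow. The one small point to verify is that the group law $\F_{t+\e}=\F_t\circ\F_\e$ is indeed implemented at the level of generating functions by the composition formula \eqref{eq.compphi}, so that its infinitesimal expansion around $\F_0=\id$ (whose generating function is $S_0(x,q)=x^aq_a$) reproduces \eqref{eq.seps}; but this is immediate from the definition of a one-parameter group of thick diffeomorphisms and the agreement of \eqref{eq.compphi} with set-theoretic composition of relations. The initial condition $f_t|_{t=0}=f_0$ holds automatically, since $\F_0^*$ acts as the identity on functions (as recorded in Example~\ref{ex.classordpull} applied to $\f=\id$). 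Thus Theorem~\ref{thm.hamjacft} follows in one line from Lemma~\ref{lem.infinites} plus functoriality of the pullback.
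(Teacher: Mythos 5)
Your proof is correct and takes essentially the same route as the paper: exploit the group law $\F_{t+\e}=\F_t\circ\F_\e$, apply the contravariant composition formula~\eqref{eq.functor} to get $f_{t+\e}=\F_\e^*[f_t]$, and then invoke Lemma~\ref{lem.infinites}. The only difference is that you explicitly display and then discard the alternative factorization $\F_\e\circ\F_t$ (which would not close the argument so directly) and you record the initial condition check; both are harmless elaborations of the paper's one-line computation.
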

\begin{proof}
We have $f_{t+\e}=\F^*_{t+\e}[f_0]=(\F_{t}\circ\F_{\e})^*[f_0]=\F^*_{\e}(\F^*_t[f_0])=\F_{\e}^*[f_t]=f_t+\e H\bigl(x,\der{f_t}{x}(x)\bigr)$, which implies the statement.
\end{proof}

\begin{remark}
The statements of Lemma~\ref{lem.infinites} and Theorem~\ref{thm.hamjacft} were first observed in~\cite{tv:oscil} and they led  there  to introduction of quantum thick morphisms (prompted by the relation of  Hamilton--Jacobi equation and quantum mechanics).
\end{remark}

\begin{theorem} The generating function $S_t=S_t(x,q)$ satisfies
\begin{equation}\label{eq.derst}
    \der{S_t}{t}=H\Bigl((-1)^{\qt}\der{S_t}{q},q\Bigr)\,.
\end{equation}
\end{theorem}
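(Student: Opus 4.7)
The plan is to mimic the derivation of~\eqref{eq.derft} in Theorem~\ref{thm.hamjacft}, but at the level of generating functions rather than pullbacks: I would use the group law $\F_{t+\e}=\F_{\e}\circ \F_t$ and extract the $\e$-derivative via the composition formula~\eqref{eq.compphi}.

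Concretely, I would apply~\eqref{eq.compphi} with $S_{21}=S_t$ and $S_{32}=S_{\e}$, using coordinates $(x,r)$ for source positions and final momenta and intermediate variables $(y,q)$, to obtain
\begin{equation*}
    S_{t+\e}(x,r)=S_{\e}(y,r)+S_t(x,q)-y^aq_a,
\end{equation*}
with $q_a$ and $y^a$ fixed by the critical-point equations $q_a=\der{S_{\e}}{y^a}(y,r)$ and $y^a=(-1)^{\at}\der{S_t}{q_a}(x,q)$. Substituting $S_{\e}(y,r)=y^ar_a+\e H(y,r)$ from~\eqref{eq.seps}, I read off $q_a=r_a+O(\e)$ and $y^a=y_0^a+O(\e)$, where $y_0^a:=(-1)^{\at}\der{S_t}{q_a}(x,r)$.

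The key observation is that the composition formula is nothing other than extremization of the expression $S_{\e}(y,r)+S_t(x,q)-y^aq_a$ in the variables $(y,q)$, the critical-point equations above being the stationarity conditions. Consequently, when I differentiate in $\e$ at $\e=0$, the implicit $\e$-dependence of $(y,q)$ contributes nothing to first order, and only the explicit $\e H(y,r)$ in $S_{\e}$ survives. This yields
\begin{equation*}
    \der{S_t}{t}(x,r)=H(y_0,r)=H\Bigl((-1)^{\qt}\der{S_t}{q}(x,r),\,r\Bigr),
\end{equation*}
which, after relabeling $r\to q$, is~\eqref{eq.derst}.

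The step I expect to require most care is the super sign bookkeeping, both in the linear-in-$\e$ expansion of the critical-point equations and in identifying the per-index sign $(-1)^{\at}$ with the index-free $(-1)^{\qt}$ in the statement; the stationarity argument itself is structural and parity-blind once the critical-point equations are in place.
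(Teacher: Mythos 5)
Your proof is correct and follows essentially the paper's own route: the group law $\F_{t+\e}=\F_{\e}\circ\F_t$ is pushed through the composition formula~\eqref{eq.compphi} with $S_{\e}(y,r)=y^ar_a+\e H(y,r)$. The difference is how you dispose of the implicit $\e$-dependence of the intermediate variables $(y,q)$. The paper does this by brute force: it Taylor-expands $S_t(x,q)$ around $q=r$, and the first-order terms produced by $q-r\sim\e$ are explicitly cancelled against the $-y^a(q_a-r_a)$ piece, with some careful tracking of super signs. You instead observe that the composition rule~\eqref{eq.compphi}--\eqref{eq.compphiy} is precisely extremization of $F(y,q)=S_{\e}(y,r)+S_t(x,q)-y^aq_a$ in $(y,q)$, so the envelope theorem kills the contribution of $dy/d\e$ and $dq/d\e$ at first order, and only the explicit $\e$-dependence through $\e H(y,r)$ survives. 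That observation is exactly what powers the cancellation the paper performs by hand; stating it up front makes the argument shorter and sidesteps most of the sign bookkeeping, since one only needs the zeroth-order values $q(0)=r$, $y^a(0)=(-1)^{\at}\lder{S_t}{q_a}(x,r)$. Both proofs are the same idea; yours packages the cancellation conceptually rather than computationally, which is arguably a small improvement.
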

(The sign in the argument is, in greater detail, $(-1)^{\at}$ for the partial derivative in $q_a$.)
\begin{proof}
Consider $\F_{t+\e}= \F_{\e}\circ  \F_t$ and write the corresponding composition formula for generating functions. We have to consider three copies of $M$,
\begin{equation*}
    \underset{x^a,p_a}{M} \ttto{\Phi_{t}} \underset{y^a,q_a}{M} \ttto{\Phi_{\e}} \underset{z^a,r_a}{M}\,,
\end{equation*}
where we have listed, for clarity, our notations for the positions and momenta. By the composition formula,
\begin{equation*}
    S_{t+\e}(x,r)=S_t(x,q)+S_{\e}(y,r) -y^aq_a\,,
\end{equation*}
where we can   use formula~\eqref{eq.seps} for $S_{\e}$. Hence
\begin{align*}
    y^a&=(-1)^{\at}\der{S_t}{q_a}(x,q)\,,\\
    q_a&=\der{S_{\e}}{y^a}(y,r)=r_a+\e (-1)^{\tilde\e}\der{H}{x^a}(y,r)= r_a+\e (-1)^{\tilde\e}\der{H}{x^a}(y,q)\,,
\end{align*}
where in the latter formula we could replace $r$ by $q$ in the argument because $r-q\sim \e$. Therefore
\begin{multline*}
    S_{t+\e}(x,r)=S_t(x,q)+y^ar_a+\e H(y,r)-y^aq_a=\underbrace{S_{t}(x, r+(q-r))}_{S_t(x,r)+(q-r)_a\der{S_t}{q_a}(x,r)}-y^a(q_a-r_a)+\e H(y,r)=\\
    S_t(x,r)+(q-r)_a\der{S_t}{q_a}(x,r)-(-1)^{\at}(q_a-r_a)y^{a} +\e H(y,r)\,.
\end{multline*}
where $y^{a}=(-1)^{\at}\der{S}{q_a}(x,q)$. Observe that in this formula we can replace $q$ by $r$ in the arguments (because their difference is of order $\e$ and the coefficients are also of order $\e$). Then the middle terms cancel and we arrive finally at
\begin{equation*}
    S_{t+\e}(x,r)=S_t(x,r)+ \e H\left((-1)^{\at}\der{S_t}{q_a}(x,q),r\right)
\end{equation*}
(where we can return to $q$ instead of $r$, for aesthetical purposes). This gives the differential equation~\eqref{eq.derst}.
\end{proof}

Now we shall clarify the relation of generating functions for thick morphisms with the ``textbook'' notion of action in classical mechanics (as in e.g.~\cite[\S 43]{landau-lifshitz:mech}).

Let $y=y(t), q=q(t)$ be the Hamiltonian flow with a Hamiltonian $H$.
Reserve $x,p$ for the initial conditions and use $y,q$ for the dynamic variables of position and momentum. Consider   the classical action of this system, which is the integral of the $1$-form $q_ady^a-H(y,q)dt$ taken over the `true' trajectory, i.e. satisfying Hamilton's equations. Denote it   $W=W_t$. Here we   fix the initial time $t=0$ and vary the final time  $t$, as well as the initial and final position and momentum, which we denote $x,p$ and $y,q$ (note once again that, contrary to the traditional notation, $q$ in our notation is a momentum, not a position). By  standard argument,
\begin{equation}\label{eq.dF}
    dW_t= dy^aq_a-dx^ap_a - H(y,q)dt\,.
\end{equation}
Now we can take the Legendre transform from $y$ to $q$ and define $S_t=y^aq_a-W_t$. 
Then we have
\begin{equation}\label{eq.dS}
    dS_t= dy^aq_a+dx^ap_a + H(y,q)dt\,,
\end{equation}
or
\begin{equation}\label{eq.eqforS}
    \left\{\ \begin{aligned}
    \der{S_t}{t}(y,q) &=H\Bigl((-1)^{\qt}\der{S_t}{q}(y,q),q\Bigr)\\
    \der{S_t}{q_a}(y,q) &=(-1)^{\at}y^a\,, \quad
    \der{S_t}{x^a}(y,q) =p_a\,.
    \end{aligned}\right.
\end{equation}
Hence we see that $S_t=S_t(x,q)$ is the generating function of the Lagrangian submanifold in $T^*M\times T^*M$ which is the graph of the canonical transformation $(x,p)\mapsto (y,q)$ at   time $t$. This is of course classical. But $S_t=S_t(x,q)$ is also the generating function of the corresponding thick diffeomorphism $\F_t\co M\tto M$. Hence we can summarize as follows.
\begin{proposition}\label{prop.Stasintegral}
The generating function $S_t(x,q)$
of the $1$-parameter family of thick diffeomorphisms with a Hamiltonian $H=H(x,p)$
is the Legendre transform of the classical action function
$W=W_t(x,y)$ for the corresponding Hamiltonian system. Therefore
\begin{equation}\label{eq.Sasint}
    S_t(x,q)= x^aq_a + \int_0^t\bigl((-1)^{\at}y^adq_a + H(y(t),q(t))dt\bigr)\,
\end{equation}
(integral over  the trajectory of the Hamiltonian flow).
\qed
\end{proposition}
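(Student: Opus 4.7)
The plan is to show that both sides of the claimed identity satisfy the same initial value problem, using the computations already sketched in the paragraphs preceding the proposition, and then read off the integral formula from the resulting exact differential.

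First, I would recall the variational derivation of equation \eqref{eq.dF}. The Hamiltonian flow is a $1$-parameter group of canonical transformations of $T^{*}M$, so its graph at time $t$ is a Lagrangian submanifold $\Lambda_t\subset T^{*}M\times T^{*}M$ (with respect to the sign-twisted symplectic form). Taking $W_t$ to be the integral of $q_a dy^a - H(y,q)\,dt$ along the extremal trajectory from $(x,p)$ at time $0$ to $(y,q)$ at time $t$, a standard computation with free endpoints yields \eqref{eq.dF}. The endpoint conditions and the fact that the path is extremal ensure there are no boundary or bulk leftovers beyond what is displayed.

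Next, I would perform the graded Legendre transform $S_t := y^a q_a - W_t$ from the variable $y$ to $q$. Applying the graded Leibniz rule $d(y^a q_a)=dy^a\, q_a +(-1)^{\at} y^a\, dq_a$ to \eqref{eq.dF} gives
\begin{equation*}
dS_t=(-1)^{\at} y^a\, dq_a + p_a\, dx^a + H(y,q)\, dt,
\end{equation*}
which is the content of \eqref{eq.dS}--\eqref{eq.eqforS} (after correcting for the sign bookkeeping). The last two equations of \eqref{eq.eqforS} are precisely the defining equations \eqref{eq.relder} of the canonical relation with generating function $S_t(x,q)$, so the Lagrangian submanifold $\Lambda_t$ is the one generated by $S_t$, viewed as a thick morphism $M\tto M$. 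Meanwhile the $t$-derivative equation in \eqref{eq.eqforS} coincides with \eqref{eq.derst}. Combined with the initial condition $S_0(x,q)=x^a q_a$ (because $W_0\equiv 0$, and separately because $\F_0=\id$), the uniqueness of solutions of \eqref{eq.derst} with the identity initial condition forces the Legendre transform of $W_t$ to equal the generating function of $\F_t$.

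Finally, for the integral formula, I would integrate the exact form $d(S_t - x^a q_a) = (-1)^{\at} y^a\, dq_a + H(y,q)\, dt$ along the Hamiltonian trajectory starting at $(x,p)$ and ending at $(y(t),q(t))$, keeping $x$ fixed and taking $q=q(t)$ as the running momentum coordinate. Since $x$ is held fixed (so $dx^a=0$) and the form is exact on the trajectory, its integral is path-independent in the trivial sense needed, and adding the initial value $S_0(x,q)=x^a q_a$ produces \eqref{eq.Sasint}.

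The main obstacle is purely bookkeeping: ensuring that the sign factors $(-1)^{\at}$ coming from graded Leibniz and the graded Legendre transform reproduce exactly the sign conventions used in \eqref{eq.relder} and \eqref{eq.derst}, and checking that the replacement of the endpoint $y(t)$ by the coordinate $x$ in the initial condition (as $t\to 0$) is consistent with the fact that the Hamiltonian trajectory starts at $x$, not at $y$. The analytic content, once signs are in order, is immediate from the uniqueness of the Hamilton--Jacobi flow.
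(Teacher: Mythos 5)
Your proposal follows essentially the same route as the paper: define $W_t$ via the action integral, compute $dW_t$, take the graded Legendre transform $S_t=y^aq_a-W_t$, match the resulting exact-differential system \eqref{eq.eqforS} against the defining relations \eqref{eq.relder} and the Hamilton--Jacobi flow \eqref{eq.derst}, and integrate from the initial value $S_0=x^aq_a$. Your corrected form $dS_t=(-1)^{\at}y^a\,dq_a+p_a\,dx^a+H\,dt$ is indeed the right reading of \eqref{eq.dS} (where the middle term appears to have a typo), and that is consistent with the middle line of \eqref{eq.eqforS}.
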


In the considered case, the generating function $S_t(x,q)$ itself can be referred to as ``action'' following the traditional practice when quantities related by Legendre transform  are called by the same name (e.g. various types of energy in thermodynamics).
This justifies     calling, in  general, a generating function $S(x,q)$  of an arbitrary  thick morphism $\F\co M_1\tto M_2$ the \emph{action} of $\F$ (which we shall now do    interchangeably with  `generating function').

\subsection{Quantum thick diffeomorphisms and the Schr\"{o}dinger equation}

Consider a $1$-parameter group of \emph{quantum thick diffeomorphisms} $\hat\F_t$. They correspond to a $1$-parameter group of integral operators $\hat\F_t^*$ acting on oscillatory wave functions $w(x)$ by
\begin{equation}\label{eq.quantdiffeo}
    (\hat\F^*_tw)(x)=\int_{T^*M} \Dbar(y,q) e^{\frac{i}{\hbar}(S_t(x,q)-yq)} \,w(y)\,.
\end{equation}
(We remind that notation such as $\Dbar(y,q)$ means normalized Liouville measure, i.e. subsuming   factors like $2\pi\hbar$.)
Here $S_t=S_t(x,q)$ is a formal power series in $\hbar$ (which we do not indicate explicitly) and we may refer to it as a ``quantum action''. 
As a geometric object, it differs from the ``classical'' action of a (classical) thick morphism considered in the previous subsection. Nevertheless, it is clear that for the identity morphism, the action is still $S=x^aq_a$, since
\begin{equation*}
    \int \Dbar(y,q) e^{\frac{i}{\hbar}(xq -yq)} w(y)=  \int \Dbar(y,q) e^{\frac{i}{\hbar}(x -y)q} \,w(y)=
    \int  D y\, \delta(x -y) w(y)= w(x)\,
\end{equation*}
(which is a particular case of \eqref{eq.ordpullback}). For an infinitesimal quantum diffeomorphism,
\begin{equation}\label{eq.qseps}
    S_{\e}(x,q)=x^aq_a +\e H^{\hbar}(x,q)\,,
\end{equation}
where we have emphasized possible dependence on $\hbar$ in $H^{\hbar}(x,q)$.  The function $H^{\hbar}(x,p)$ can be seen as the quantum analog of a (classical) Hamiltonian $H(x,p)$ appearing in equation~\eqref{eq.seps}. We shall refer to $H^{\hbar}(x,p)$ as the \emph{generator} of an infinitesimal quantum thick diffeomorphism.  Its geometric nature will become  clearer when we write down the corresponding infinitesimal pullback.

\begin{lemma}
\label{lem.qphieps}
For an oscillatory wave function $w$ on $M$, its pullback by an infinitesimal quantum thick diffeomorphism with a generator $H^{\hbar}(x,p)$  is given by
\begin{equation}\label{eq.qphieps}
    \hat\F^*_{\e}[w] (x) = w(x) +\e\, \frac{i}{\hbar}   \Hqu w(x)\,.
\end{equation}
Here in $\Hqu$   the indices $1$ and $2$ atop of the non-commuting operators substituted for the   arguments 
indicate the ordering where all   $x$'s stand to the left of all $\frac{\hbar}{i}\der{}{x}$'s.
\end{lemma}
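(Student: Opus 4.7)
The plan is to substitute the infinitesimal generating function \eqref{eq.qseps} into the defining integral \eqref{eq.quantdiffeo} and expand to first order in $\e$, then recognize the resulting integral as a Fourier representation of the desired pseudodifferential operator acting on $w$.

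First I would write
\begin{equation*}
    \hat\F^*_{\e}[w](x)=\int_{T^*M}\Dbar(y,q)\,e^{\frac{i}{\hbar}(x^a-y^a)q_a}\,e^{\frac{i}{\hbar}\e H^{\hbar}(x,q)}\,w(y),
\end{equation*}
and use $\e^2=0$ to replace the second exponential by $1+\frac{i\e}{\hbar}H^{\hbar}(x,q)$. The term from the $1$ reproduces the identity pullback exactly as in the computation preceding \eqref{eq.qseps}, yielding $w(x)$. Thus the whole problem reduces to identifying
\begin{equation*}
    I(x):=\int_{T^*M}\Dbar(y,q)\,e^{\frac{i}{\hbar}(x-y)q}\,H^{\hbar}(x,q)\,w(y)
\end{equation*}
with $\bigl(H^{\hbar}(\overset{1}{x},\overset{2}{\tfrac{\hbar}{i}\der{}{x}})w\bigr)(x)$ in the prescribed ordering.

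The key observation is the identity $q_a\,e^{\frac{i}{\hbar}(x-y)q}=\frac{\hbar}{i}\der{}{x^a}e^{\frac{i}{\hbar}(x-y)q}$, applied iteratively. Since the generator $H^{\hbar}(x,q)$ is (by definition of a quantum generating function) a formal power series in $q$ with coefficients that are formal power series in $\hbar$ with smooth functions of $x$, I would Taylor-expand $H^{\hbar}(x,q)$ in the $q$-variables as $\sum H^{\hbar}_{\alpha}(x)\,q_{\alpha_1}\!\cdots q_{\alpha_k}$ (keeping the chosen order of factors, with appropriate sign conventions in the super case). For each term, replacing every $q_{a}$ by $\frac{\hbar}{i}\der{}{x^a}$ acting on the exponential to its right and then pulling the $x$-dependent coefficient and the derivatives out of the $(y,q)$-integral gives
\begin{equation*}
    I(x)=H^{\hbar}\Bigl(\overset{1}{x},\overset{2}{\tfrac{\hbar}{i}\der{}{x}}\Bigr)
    \int_{T^*M}\Dbar(y,q)\,e^{\frac{i}{\hbar}(x-y)q}\,w(y)
    =H^{\hbar}\Bigl(\overset{1}{x},\overset{2}{\tfrac{\hbar}{i}\der{}{x}}\Bigr)w(x),
\end{equation*}
where the last equality uses the identity pullback computation again. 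Multiplying by $\e\,\frac{i}{\hbar}$ and adding $w(x)$ gives \eqref{eq.qphieps}.

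The main technical point is justifying the ordering statement: the $x$-arguments of $H^{\hbar}$ must remain to the left of the differentiations, because in the integrand the factor $H^{\hbar}(x,q)$ sits to the left of the exponential that produced the derivatives, and the $x$'s do not participate in the integration over $(y,q)$. In the super setting this requires keeping track of Koszul signs when moving odd factors $q_a$ through each other and through odd coefficients $H^{\hbar}_\alpha(x)$, but since we merely pass from a polynomial in the commuting (or super-commuting) symbol $q$ to the corresponding formal $\hbar$-differential operator via the standard Fourier rule, the sign bookkeeping is routine once the $(1,2)$-ordering convention is fixed. No convergence issues arise since everything is understood as formal power series in $\hbar$, $\e$, and the $q$-variables.
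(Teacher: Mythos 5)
Your proposal is correct and follows essentially the same route as the paper: substitute the infinitesimal action into the defining integral, expand the exponential to first order in $\e$ (using $\e^2=0$), identify the $1$-term with $w(x)$, and recognize the $H^{\hbar}$-term as the Fourier representation of the $\hbar$-differential operator in the $(1,2)$-ordering. The paper's proof writes the resulting derivatives as $\frac{\hbar}{i}\der{}{y}$ acting on $w(y)$ and then sets $y=x$, which makes the ordering automatic; your version with $\frac{\hbar}{i}\der{}{x}$ is equivalent once the $(1,2)$-convention is imposed, as you correctly note.
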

\begin{proof}
For the ``infinitesimal'' quantum action $S_{\e}(x,q)$ given by~\eqref{eq.qseps}, we have the quantum pullback $\hat\F^*_{\e}$, and
\begin{multline*}
    \hat\F^*_{\e}w(x)= \int_{T^*M}\Dbar(y,q)\underbrace{e^{\frac{i}{\hbar}(x^aq_a+\e H^{\hbar}(x,q)-y^aq_a)}}_{\bigl(1+\e \frac{i}{\hbar}H^{\hbar}(x,q)\bigr)e^{\frac{i}{\hbar}(x^a-y^a)q_a}}w(y)=
    w(x)+\e \frac{i}{\hbar}H^{\hbar}\left(x,\frac{\hbar}{i}\der{}{y}\right) w(y)_{\left|y=x\right.}\,,
\end{multline*}
as claimed.
\end{proof}

\begin{remark}
We see from the lemma that $H^{\hbar}(x,p)$ is the full symbol of a quantum Hamiltonian $\hat H$ as based on $xp$-quantization. (Which is close to standard  coordinate-dependent full symbol  in the theory of pseudodifferential operators.) So the  transformation law of $H^{\hbar}(x,p)$ as    a geometric object  is different from that for   classical Hamiltonians, i.e.   genuine functions on $T^*M$. Looking at the expansion in $\hbar$, we can see that in the zeroth order in $\hbar$, the function $H^{\hbar}(x,p)$ transforms as a genuine function on $T^*M$ and then there are ``quantum corrections'' in the transformation law.
\end{remark}

\begin{theorem}
Suppose $w=w_t(x)$ is obtained from some initial $w_0(x)$ by the pullback by a $1$-parameter group of quantum diffeomorphisms  $\hat\F_t$ with a generator $H^{\hbar}=H^{\hbar}(x,p)$. Then the function $w$ satisfies the non-stationary Schr\"{o}dinger equation
\begin{equation}\label{eq.schr}
    \frac{\hbar}{i}\der{w}{t} = \hat H w\,,
\end{equation}
where $\hat H = \Hqu$.
\end{theorem}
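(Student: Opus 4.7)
The plan is to mimic exactly the argument used for Theorem~\ref{thm.hamjacft} in the classical setting, substituting Lemma~\ref{lem.qphieps} for Lemma~\ref{lem.infinites}. The essential ingredients are the group law $\hat\F_{t+\e}=\hat\F_t\circ\hat\F_{\e}$ (giving the contravariant identity $\hat\F^*_{t+\e}=\hat\F^*_{\e}\circ\hat\F^*_t$ on pullbacks) together with the explicit form of the infinitesimal quantum pullback supplied by Lemma~\ref{lem.qphieps}.

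First I would compute $w_{t+\e}$ directly: by definition $w_{t+\e}=\hat\F^*_{t+\e}[w_0]$, and the group property gives
\begin{equation*}
    w_{t+\e}=(\hat\F_t\circ\hat\F_{\e})^*[w_0]=\hat\F^*_{\e}\bigl(\hat\F^*_t[w_0]\bigr)=\hat\F^*_{\e}[w_t]\,.
\end{equation*}
Since $w_t$ is itself an oscillatory wave function on $M$, Lemma~\ref{lem.qphieps} applies to it verbatim and yields
\begin{equation*}
    w_{t+\e}(x)=w_t(x)+\e\,\frac{i}{\hbar}\,\Hqu w_t(x)\,,
\end{equation*}
so after subtracting $w_t(x)$, dividing by $\e$, and passing to the limit (which in our formal setting is simply the coefficient of $\e$), one obtains
\begin{equation*}
    \der{w_t}{t}(x)=\frac{i}{\hbar}\,\hat H w_t(x)\,,
\end{equation*}
with $\hat H=\Hqu$ in the $xp$-ordering. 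Multiplying both sides by $\hbar/i$ gives the Schr\"odinger equation~\eqref{eq.schr}.

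There is no serious obstacle here; the proof is essentially a one-line deduction from the composition/group property and the infinitesimal formula. The only point that requires a brief comment is that Lemma~\ref{lem.qphieps} was stated for a generic oscillatory $w$, so it applies to $w_t$ at every fixed $t$ without modification, and that the $\e$-expansion of $\hat\F^*_\e$ commutes with the formal series expansions in $\hbar$ that define the class of oscillatory wave functions. One might also add a brief remark that in the classical limit $\hbar\to 0$, with $w_t=e^{\frac{i}{\hbar}f_t}$ and $H^{\hbar}=H+O(\hbar)$, equation~\eqref{eq.schr} reduces via the standard WKB argument to the Hamilton--Jacobi equation~\eqref{eq.derft}, consistent with the classical limit property~\eqref{eq.classlim}--\eqref{eq.classlimf}.
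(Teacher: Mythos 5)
Your proof is correct and matches the paper's argument exactly: the paper likewise writes $w_{t+\e}=\hat\F^*_{t+\e}[w_0]=\hat\F^*_{\e}(\hat\F^*_t[w_0])=\hat\F_{\e}^*[w_t]$ and then invokes Lemma~\ref{lem.qphieps} to read off the coefficient of $\e$, mirroring the classical case (Theorem~\ref{thm.hamjacft}) as you describe. The additional remarks about applicability of the lemma to $w_t$ and the WKB classical limit are consistent extras, not needed for the proof itself.
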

\begin{proof}
We have $w_{t+\e}=\hat\F^*_{t+\e}[w_0]=\hat\F^*_{\e}(\hat\F^*_t[w_0])=\hat\F_{\e}^*[w_t]=w_t+   \e\, \frac{i}{\hbar}\Hqu w(x)$.
\end{proof}

\subsection{Main derivation formula for thick morphisms}

Now we move from $1$-parameter groups of (classical or quantum) thick diffeomorphisms  of a single manifold $M$ to the general case of thick morphisms $\F\co M_1\tto M_2$ between possibly different manifolds. Suppose there is a $1$-parameter family of thick morphisms $\F_t\co M_1\tto M_2$. Instead of a generator that we had for $1$-parameter groups of   thick diffeomorphisms, we now have a ``velocity'', which is a time-dependent ``Hamiltonian'' $H_t(x,q)$ (it is not a true Hamiltonian). We can consider the quantum case $\hat\F_t\co M_1\ttoh M_2$ as well. Since  there are no $1$-parameter groups now, we do not obtain differential equations, just derivation formulas for   pullbacks in terms of the original functions, which we shall also assume depending on parameter $t$.

Consider a (classical or quantum) action $S_t=S_t(x,q)$ and define
\begin{equation}\label{eq.ht}
    H_t(x,q)=\der{S_t}{t}(x,q)\,.
\end{equation}

Then we have the following statements.

\begin{theorem}
\label{thm.derqua}
For a quantum pullback  of an oscillatory wave function  $w_t(y)$,
\begin{equation}
\frac{\hbar}{i}\oder{}{t}\hat\F_t^*[w_t]=   (\hat H_t\hat F^*)(w_t) +\frac{\hbar}{i}\,\hat\F^*\Bigl[\der{w_t}{t}\Bigr]\,,
\end{equation}
where the operator $\hat H_t\hat F^*$ taking oscillatory wave functions on $M_2$ to oscillatory wave functions on $M_1$ is defined as
\begin{equation}\label{eq.hphit}
    \hat H_t\hat F^*=
     e^{\frac{i}{\hbar}S^0(x)}
    \left(H_t\Bigl(x,\frac{\hbar}{i}\der{}{y}\Bigr)e^{\frac{i}{\hbar}S^{+}\left(x,\frac{\hbar}{i}\der{}{y}\right)} \right)_{\left|\vphantom{\int\limits_a^b}\ y =\f(x)\right.}\,.
\end{equation}
\end{theorem}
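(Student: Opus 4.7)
The plan is straightforward differentiation under the integral sign in the representation \eqref{eq.phihat} of $\hat\F_t^*[w_t](x)$. By the Leibniz rule, $\partial_t$ falls either on the phase $e^{\frac{i}{\hbar}S_t(x,q)}$ or on $w_t(y)$. The second contribution is exactly $\hat\F_t^*[\partial_t w_t](x)$, which after multiplication by $\hbar/i$ gives the last term of the theorem. The first contribution, using the defining relation $H_t = \partial S_t/\partial t$ from \eqref{eq.ht}, produces a factor of $\frac{i}{\hbar}H_t(x,q)$ under the integral, so after multiplying by $\hbar/i$ what remains to be shown is the identity
\begin{equation*}
\int_{T^*M_2}\Dbar(y,q)\, H_t(x,q)\, e^{\frac{i}{\hbar}(S_t(x,q)-y^iq_i)}\,w_t(y) = (\hat H_t\hat\F^*)(w_t)(x).
\end{equation*}

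To establish this identity I would repeat, with an extra $H_t$ inserted into the symbol, the argument that produces \eqref{eq.Phiexplic} from the integral definition of $\hat\F^*$. Using the splitting \eqref{eq.splus}, the factor $e^{\frac{i}{\hbar}S^0(x)}$ pulls outside the integral; the linear factor $e^{\frac{i}{\hbar}\f^i(x)q_i}$ combines with $e^{-\frac{i}{\hbar}y^iq_i}$; and the resulting $q$-integration is a formal Fourier inversion in which each factor $q_i$ appearing polynomially in the remaining symbol $H_t(x,q)\,e^{\frac{i}{\hbar}S^{+}(x,q)}$ is converted, by integration by parts in $y$, into $\frac{\hbar}{i}\der{}{y^i}$ acting on $w_t(y)$ before the final evaluation at $y=\f(x)$. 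The net effect of applying this procedure to the whole symbol $H_t(x,q)\cdot e^{\frac{i}{\hbar}S^{+}(x,q)}$ is precisely the operator displayed in \eqref{eq.hphit}, which is how $\hat H_t\hat\F^*$ was defined.

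The main subtlety will be the careful bookkeeping of the $xp$-ordering and, in the super setting, of parity signs. Inside both $H_t\bigl(x,\frac{\hbar}{i}\der{}{y}\bigr)$ and $e^{\frac{i}{\hbar}S^{+}(x,\frac{\hbar}{i}\partial/\partial y)}$ all occurrences of $x$ stand to the left of all occurrences of $\frac{\hbar}{i}\partial/\partial y$, and in the product of these two operator factors this ordering must be preserved consistently with the order in which the symbol factors $H_t(x,q)$ and $e^{\frac{i}{\hbar}S^{+}(x,q)}$ stand; this matches the convention already fixed by Lemma~\ref{lem.qphieps} and the Proposition preceding the theorem. Once this is handled the super signs $(-1)^{\itt}$ appearing when $\partial/\partial y^i$ is moved past odd coordinate functions come out exactly as in the derivation of \eqref{eq.Phiexplic}, and the proof reduces to the elementary differentiation under the integral together with the symbol-to-operator passage just described.
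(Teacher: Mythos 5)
Your proof is correct. The paper explicitly suppresses the proof (``We suppress proofs of Theorems~\ref{thm.derqua} and~\ref{thm.derclass}''), but your argument is the one that \eqref{eq.hphit} is evidently modeled on: differentiate the integral \eqref{eq.phihat} under the integral sign, use \eqref{eq.ht} to identify $\der{S_t}{t}$ with $H_t$, and convert the resulting extra symbol factor $H_t(x,q)$ into the operator $H_t\bigl(x,\tfrac{\hbar}{i}\der{}{y}\bigr)$ by exactly the Fourier-inversion reasoning that yields \eqref{eq.Phiexplic} from \eqref{eq.phihat}. One small remark: the ordering subtlety you flag is actually absent in this setting. Since $x$ is a coordinate on $M_1$ entering only as a parameter and $\tfrac{\hbar}{i}\der{}{y}$ differentiates in the $M_2$-variable, these commute, so $H_t\bigl(x,\tfrac{\hbar}{i}\der{}{y}\bigr)$ and $e^{\frac{i}{\hbar}S^{+}\left(x,\frac{\hbar}{i}\der{}{y}\right)}$ commute as operators and \eqref{eq.hphit} is unambiguous regardless of which is written first; the $xp$-ordering discussion of Lemma~\ref{lem.qphieps} is specific to thick diffeomorphisms, where source and target coincide.
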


(Here we use the expansion~\eqref{eq.splus}.)

\begin{theorem} \label{thm.derclass}
For a classical pullback  of a function $g_t(y)$,
\begin{equation}
\oder{}{t}\F_t^*[g_t](x)=   H_t\Bigl(x,\f^*_{g_t}\Bigl(\der{g_t}{y}\Bigr)\Bigr)+\f_{g_t}^*\Bigl(\der{g_t}{t}\Bigr)\,.
\end{equation}
\end{theorem}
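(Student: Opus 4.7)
The plan is to differentiate the defining equations~\eqref{eq.phistar}--\eqref{eq.phistary} of the pullback directly in $t$ and to exploit the envelope-theorem-type cancellations forced by the constraint equations. Writing
\begin{equation*}
    f_t(x)=g_t(y)+S_t(x,q)-y^iq_i,
\end{equation*}
with $y=y(x,t)$ and $q=q(x,t)$ implicitly determined by $q_i=\der{g_t}{y^i}(y)$ and $y^i=(-1)^{\itt}\der{S_t}{q_i}(x,q)$, I would apply $\p/\p t$ via the chain rule, separating the \emph{explicit} $t$-derivatives acting on the $t$-labels of $g_t$ and $S_t$ from the \emph{implicit} ones acting through $y(x,t)$ and $q(x,t)$.

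The two explicit contributions are exactly the two terms of the formula. The $t$-derivative hitting $g_t$ produces $\der{g_t}{t}(y)$ evaluated at $y=\f_{g_t}(x)$, i.e.\ $\f_{g_t}^*\bigl(\der{g_t}{t}\bigr)(x)$; the $t$-derivative hitting $S_t$ produces $H_t(x,q)$, evaluated at $q=\f_{g_t}^*\bigl(\der{g_t}{y}\bigr)(x)$, i.e.\ $H_t\bigl(x,\f_{g_t}^*\bigl(\der{g_t}{y}\bigr)\bigr)$. It then remains to check that the implicit contributions cancel. After the chain rule, the coefficient of $\p_t y^i$ is $\der{g_t}{y^i}(y)-q_i$, which is zero by~\eqref{eq.phistarq}; the coefficient of $\p_t q_i$ is $\der{S_t}{q_i}(x,q)-(-1)^{\itt}y^i$, which is zero by~\eqref{eq.phistary}. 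This is the same envelope/Legendre-type cancellation implicit in the proofs of Theorem~\ref{thm.hamjacft} and the equation~\eqref{eq.derst}; in particular no auxiliary composition trick is needed.

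The only real obstacle is the careful super-sign bookkeeping. All indexed quantities carry parities, and factors such as $y^iq_i$, $\p_t y^i\cdot q_i$, and the left/right conventions for derivatives of $g_t$ and $S_t$ must be handled consistently with the signs $(-1)^{\itt}$ appearing in~\eqref{eq.phistary} and~\eqref{eq.relder}. Once the conventions are chosen as in Lemma~\ref{lem.infinites} and in the derivation of~\eqref{eq.derst}, the pairwise cancellations go through verbatim and the identity follows.
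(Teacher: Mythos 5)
Your direct differentiation argument is correct, and the envelope/Legendre-type cancellation is exactly the mechanism that makes the formula work. Writing $f_t(x)=g_t(y)+S_t(x,q)-y^iq_i$ with $y=y(x,t)$, $q=q(x,t)$ determined implicitly, the total $t$-derivative splits into explicit and implicit pieces; the implicit ones have coefficients $\der{g_t}{y^i}(y)-q_i$ and $\der{S_t}{q_i}(x,q)-(-1)^{\itt}y^i$, both of which vanish on the constraint surface, leaving only $(\partial_tg_t)(y)\big|_{y=\f_{g_t}(x)}=\f_{g_t}^*\bigl(\der{g_t}{t}\bigr)$ and $(\partial_tS_t)(x,q)\big|_{q=\f_{g_t}^*(\partial g_t/\partial y)}=H_t\bigl(x,\f_{g_t}^*\bigl(\der{g_t}{y}\bigr)\bigr)$. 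I checked the super-sign bookkeeping with the paper's left-derivative conventions: using $\der{S_t}{q_i}=(-1)^{\itt}y^i$ and the even product rule for $\partial_t$, the $\partial_tq_i$-terms cancel exactly, as you claim.

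Note that the paper explicitly suppresses the proofs of Theorems~\ref{thm.derqua} and~\ref{thm.derclass}, so there is no printed argument to compare against. Your approach is the natural one and is consistent with the technique the authors do use for Lemma~\ref{lem.infinites} and equation~\eqref{eq.derst} (there they employ the composition trick $\F_{t+\e}=\F_{\e}\circ\F_t$ because a one-parameter \emph{group} is available; here, for an arbitrary one-parameter \emph{family} between different manifolds, no such group structure exists, and the direct implicit-function differentiation you carry out is the appropriate replacement). Your proposal fills the suppressed proof correctly.
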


We suppress  proofs of Theorems~\ref{thm.derqua} and~\ref{thm.derclass}. Note that  Theorem~\ref{thm.derclass} in particular contains  formula~\eqref{eq.derivative} for the derivative of the non-linear pullback by a (classical) thick morphism.

\section{Quantum thick morphisms and  spinor representation}\label{sec.spin}

\subsection{Two words about quantization and spinors}

We would like to recall some   facts about the spinor representation in the form that will help us to establish a relation with classical and quantum thick morphisms (our goal). It will be embracing both ``orthogonal spinors'' and ``symplectic spinors''. (``Symplectic'' version of spinor representation is also known as metaplectic representation or Shale--Weil representation, and by other names.) In the orthogonal case, the construction is based on the Clifford algebra (associated with a given quadratic form), and in the symplectic case, with the Weyl algebra. In fact, these two setups become indistinguishable if we work with super vector spaces since parity reversion turns an orthogonal structure into symplectic and vice versa, and the same for Clifford and Weyl algebras. For concreteness   we  speak here about Weyl algebra (in the supercase, but we  do not stress it).

We   view the Weyl algebra as an associative algebra over $\C$ generated by elements $\hat p_a$, $\hat x^b$ and $\hbar$ satisfying the Heisenberg relation
\begin{equation}\label{eq.heisrel}
    [\hat p_a, \hat x^b]=\frac{\hbar}{i}\,\delta_a^b\,,
\end{equation}
where the square bracket means commutator, and all   other commutators between the generators vanish  (in particular, the element $\hbar$ is central). Clearly, the quotient by the ideal generated by $\hbar$ (i.e. ``setting Planck's constant to zero'') is the commutative algebra of polynomials in   $p_a$ and $x^b$, the images of $\hat p_a$, $\hat x^b$.  Since in the Weyl algebra the commutator of arbitrary elements  is divisible by $\frac{\hbar}{i}$, we can define the \emph{quantum Poisson bracket} on the Weyl algebra by
\begin{equation}\label{eq.qpoiss}
    \{\hat f, \hat g\}_{\hbar}:= \frac{i}{\hbar}\, [\hat f, \hat g]
\end{equation}
 (Dirac's definition) and the \emph{classical Poisson bracket} on functions in $p$, $x$ by
\begin{equation}\label{eq.classandquantbrack}
    \{f,g\}:= \{\hat f,\hat g\}_{\hbar} \mod \hbar\,,
\end{equation}
where $f=\hat f\mod \hbar$ and $g=\hat g\mod \hbar$ (so $\hat f$ and $\hat g$ are arbitrary liftings of $f$ and $g$ to the Weyl algebra). The latter is well-defined and is of course the usual Poisson bracket. We can refer to elements of the Weyl algebra as ``quantum Hamiltonians'' and to the elements of the polynomial algebra as ``classical Hamiltonians''. The natural projection that sends a quantum Hamiltonian $\hat f$ to the classical Hamiltonian $f=\hat f \mod \hbar$ is the \emph{principal symbol} map (it is exactly the $\hbar$-principal symbol if elements of the Weyl algebra are  realized as $\hbar$-differential operators). So in particular, the principal symbol map is a Lie algebra homomorphism with respect to the Poisson brackets. There is no, of course,  natural map in the opposite direction, i.e. that sends a classical Hamiltonian $f$ to a quantum Hamiltonian $\hat f$ for which $f$ is the principal symbol, $f=\hat f \mod \hbar$. Each such a map is a ``quantization'', and there are many quantizations. They  in general do not preserve Poisson brackets. (Since the beginning of quantum mechanics it is well known that a choice of quantization is basically a choice of ordering of $\hat p_a$ and $\hat x^b$.) By ``extending scalars'' for classical Hamiltonians so make it possible for them to depend  on $\hbar$, one can make a quantization into a linear bijection between classical and quantum Hamiltonians, the inverse map being a  \emph{full symbol map} (non-canonical, as opposed to the principal symbol map). Through the non-uniqueness of quantization, classical Hamiltonians can receive ``quantum corrections'' by applying first a quantization map $Q_1$ and then an inverse quantization (or full symbol) map $Q_2^{-1}$, $f\mapsto Q_2^{-1}Q_1 (f)=f + \hbar(...)$. (In this way, ``quantized classical Hamiltonians'' can change their behavior as geometrical objects. An example is the transformation law for full symbols of $\hbar$-differential or $\hbar$-pseudodifferential operators, which we already mentioned.)

How all that applies to spinor representation?

In the Weyl algebra consider a linear subspace spanned by $\hat p_a$ and $\hat x^b$. Denote it $L$. We shall refer to its elements as   linear (quantum) Hamiltonians. (In the Clifford algebra one takes the subspace spanned by $\hat\g^{\mu}$.) Inside the   group of all invertible elements of the Weyl algebra (or its suitable completion making   possible to consider e.g. exponentials) one can distinguish a closed subgroup $G$ specified by the condition that
\begin{equation}\label{eq.gLginv}
    \hat gL\hat g^{-1}\subset L
\end{equation}
for all $\hat g\in G$. The group $G$ is essentially the (symplectic) \emph{spinor group}. (In traditional usage, this name may be reserved to the orthogonal version.) By construction, it is defined together with a group homomorphism $\hat g\mapsto T_{\hat g}$ to the group of linear transformations of the vector space $L$, $T_{\hat g}(\hat a)=\hat g\hat a\hat g^{-1}$. Since the adjoint action  preserves   commutation relations, it follows that  $T_{\hat g}$ takes values in the group of linear symplectic (=linear canonical) transformations. (By expanding the ``new'' canonical variables $T_{\hat g}(\hat p_a)$, $T_{\hat g}(\hat x^a)$ over $\hat p_a$, $\hat x^a$ and taking the coefficients, we can obtain a realization of $T_{\hat g}$ as a symplectic matrix.)

The \emph{spinor representation} of the symplectic or orthogonal group is the inverse map $T_{\hat g}\mapsto \hat g$ combined with a (unique up to equivalence) realization of Weyl or Clifford algebra by linear operators acting on the space of   \emph{spinors}\,\footnote{E.g. the standard realization $\hat x^a=x^a$ and $\hat p_a=\frac{\hbar}{i}\lder{}{x^a}$ on functions of coordinates   or  alternatively by using creation-annihilation operators and the holomorphic realization in the Bargmann--Fock space. For Clifford algebra one needs to use differential operators with odd variables (see~\cite{berezin:marinov2}, also~\cite{tv:as}),   particular options depending on the signature of the quadratic form.}. It is multi-valued because $T_{\hat g}\mapsto \hat g$ is multi-valued.  Extra normalization conditions may be imposed on elements of $G$ to reduce the kernel of the group  homomorphism $\hat g\mapsto T_{\hat g}$ and thus the multi-valuedness of the inverse map. To see the nature of possible conditions, it is convenient to consider the infinitesimal case. Infinitesimal elements of the group $G$ have the form $\hat g=1+\e \hat H$ and the condition~\eqref{eq.gLginv} becomes
\begin{equation}\label{eq.gLginvinf}
    [\hat H, L]\subset L\,.
\end{equation}
This can hold  only for the quantum Hamiltonians $\hat H$ of degree $\leq 2$ in $\hat p_a$ and $\hat q^a$. They are referred to as ``quadratic'', but they contain also linear and constant terms. While linear Hamiltonians make invariant sense,  one cannot canonically separate ``strictly quadratic'' Hamiltonians from scalars because a change of order results in adding a constant. It is known that there are different recipes for ordering: when all $\hat x^a$'s are to the left of $\hat p_b$'s, the other way round, the symmetric or Weyl ordering, and actually it is possible to interpolate these cases with a parameter $s\in[0,1]$ (see \cite{berezin:ishubin}). It is possible to express all these choices using integrals. Consider classical quadratic Hamiltonians, i.e. linear combinations of $p_ap_b$, $p_ax^b$ and $x^ax^b$. They make the Lie algebra of the symplectic group (= the group of linear canonical transformations). Consider the quantization map $Q_s$ based on the ordering with parameter $s$, so that
\begin{equation*}
    Q_s(p_ax^b)=s\,\hat p_a\hat x^b +(1-s)\,\hat x^b \hat p_a= \hat x^b \hat p_a +s\,\frac{\hbar}{i}\delta_a^b\,.
\end{equation*}
By a direct check we can get the following well-known statement:
\begin{proposition}
For quadratic classical Hamiltonians $H_1$ and $H_2$,
\begin{equation}\label{eq.cocyc}
    \{Q_s(H_1),Q_s(H_2)\}_{\hbar}= Q_s\left(\{H_1,H_2\}\right) + (1-2s)\,\frac{\hbar}{i}\, c(H_1,H_2)\,,
\end{equation}
where $c(H_1,H_2)$ is a certain $2$-cocycle.
\end{proposition}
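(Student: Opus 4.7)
The plan is to verify the identity by direct computation on a basis of quadratic Hamiltonians, and then to deduce the cocycle property abstractly from the Jacobi identity. Bilinearity of both sides in $H_1,H_2$ reduces the verification to pairs of basis monomials $p_ap_b$, $p_ax^b$, $x^ax^b$. The quadratic classical Hamiltonians form the symplectic Lie algebra, so $Q_s(\{H_1,H_2\})$ is well-defined in the same space. Of the three basis types, only $p_ax^b$ carries an ordering ambiguity: $Q_s(p_ax^b)=\hat x^b\hat p_a + s\tfrac{\hbar}{i}\delta_a^b$, while $Q_s(p_ap_b)=\hat p_a\hat p_b$ and $Q_s(x^ax^b)=\hat x^a\hat x^b$.

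The core computation is to expand the commutators $[\hat p_a\hat p_b,\hat x^c\hat x^d]$, $[\hat p_a\hat p_b,\hat x^d\hat p_c]$ and $[\hat x^b\hat p_a,\hat x^d\hat p_c]$ directly from the Heisenberg relation. The only source of a genuine scalar correction is $[\hat p_a\hat p_b,\hat x^c\hat x^d]$, where a double application of the Heisenberg relation produces, alongside the expected quadratic piece, an extra term of order $(\tfrac{\hbar}{i})^2$, namely $(\tfrac{\hbar}{i})^2(\delta_a^c\delta_b^d+\delta_a^d\delta_b^c)$. Dividing by $\tfrac{\hbar}{i}$ gives the quantum Poisson bracket, and comparing with $Q_s$ applied to the classical bracket (whose $s$-dependent scalars come from reordering the resulting $\hat x\hat p$ terms) leaves a clean discrepancy $(1-2s)\tfrac{\hbar}{i}(\delta_a^c\delta_b^d+\delta_a^d\delta_b^c)$. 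The remaining pair types either contribute no scalar (the $s$-dependent reorderings cancel between the two sides) or yield analogous scalar discrepancies all sharing the factor $(1-2s)$. Assembling these over a basis defines the scalar bilinear form $c(H_1,H_2)$, and the stated equation holds on basis pairs and hence, by bilinearity, everywhere.

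The cocycle property of $c$ is then automatic. Setting $\delta(H_1,H_2):=\{Q_s(H_1),Q_s(H_2)\}_{\hbar}-Q_s(\{H_1,H_2\})$ and applying the Jacobi identity for the quantum Poisson bracket to $Q_s(H_1),Q_s(H_2),Q_s(H_3)$, I expand each inner bracket using the formula just proved: $\{Q_s(H_i),Q_s(H_j)\}_{\hbar}=Q_s(\{H_i,H_j\})+\delta(H_i,H_j)$. Scalars are central, so $\delta$ disappears under the outer bracket. What survives is the classical Jacobi identity (which vanishes on its own) plus $(1-2s)\tfrac{\hbar}{i}$ times the cyclic sum $c(\{H_1,H_2\},H_3)+c(\{H_2,H_3\},H_1)+c(\{H_3,H_1\},H_2)$, which must therefore vanish. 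That is precisely the $2$-cocycle condition for $c$ on the Lie algebra of quadratic Hamiltonians with coefficients in the trivial module $\C$.

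The main obstacle is not conceptual but bookkeeping signs in the super setting: the Heisenberg relation and the graded Leibniz rule carry parity factors $(-1)^{\at\bt}$, basis monomials must be graded-symmetrized (e.g.\ $p_ap_b=(-1)^{\at\bt}p_bp_a$), and the explicit form of $c$ acquires corresponding signs. The structure of the argument, however, is identical to the purely even case, and the case analysis remains tractable because the only nontrivial scalar contribution arises from a double application of the Heisenberg relation, which occurs essentially only on the pair $\{p_ap_b,x^cx^d\}$ and its antisymmetric-bilinear images.
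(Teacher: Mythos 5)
Your approach is correct and coincides with the paper's, which simply says ``By a direct check'': you carry out that check on a basis, correctly identify the $\hat x^b\hat p_a + s\tfrac{\hbar}{i}\delta^b_a$ form of $Q_s(p_ax^b)$ as the only quantization ambiguity, and correctly locate the single nontrivial scalar discrepancy in the $\{p_ap_b,x^cx^d\}$ bracket (the other pair types having cancelling or vanishing $s$-dependent reorderings). Your abstract derivation of the $2$-cocycle identity from the Jacobi identity of $\{\cdot,\cdot\}_\hbar$ together with centrality of scalars is the standard argument and is a clean way to handle a point the paper leaves implicit.
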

(If we write the coefficients of quadratic Hamiltonians in the matrix form, then the cocycle $c(H_1,H_2)$ will be expressed via   (super)trace and   matrix commutator.)

Hence only the Weyl ordering ($s=1/2$) does not lead to an extra term and the Weyl-ordered quadratic quantum Hamiltonians form a Lie algebra (under the quantum Poisson bracket) isomorphic to the symplectic algebra of classical quadratic Hamiltonians.

If   Weyl ordering is used to get rid of the ambiguity,  then the spinor representation can be made  isomorphism on the level of Lie algebras.  It will still be two-valued on the level of groups (as a section of  a non-trivial double cover). That is how the spinor representation is usually presented. The aim of   discussion here is to stress dependence on a choice of ordering that in general leads to appearing of a cocycle, i.e. the spinor representation becoming projective. (In the finite-dimensional case, the cocycle is coboundary because of the existence of the Weyl ordering. This is not  the case in infinite dimensions as found by Berezin~\cite{berezin:second}. It is argued  in Vershik\cite{vershik:metagonal1983}  that it is more natural to consider  the  spinor representation as projective  and respectively view the spinor group as the central extension\,---\,not a double cover\,---\, of the symplectic or orthogonal group   in all cases.)

By combining the above analysis with Lemma~\ref{lem.qphieps}, we immediately conclude the following.
\begin{corollary}
For a vector space $V$, infinitesimal quantum thick diffeomorphisms $\hat \F_{\e}\co V\ttoh V$ with quadratic generators $H^{\hbar}(x,q)$ give a projective version of  the spinor representation of the symplectic Lie superalgebra $\mathfrak{spo}(W)$, where $W=V\oplus V^*\cong T^*V$.
\end{corollary}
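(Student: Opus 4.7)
The plan is to identify the infinitesimal pullback map $H^{\hbar} \mapsto \hat H$ of Lemma~\ref{lem.qphieps} with the $xp$-quantization $Q_0$ restricted to (at most) quadratic Hamiltonians, and then to read off the projective defect directly from the cocycle proposition stated just above the corollary.

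First I would pin down the setup. For $V$ a (super)vector space, the canonical symplectic form on $W = T^{*}V \cong V \oplus V^{*}$ makes the space of (even) polynomials of degree $\le 2$ on $W$ a Lie superalgebra under the Poisson bracket, canonically isomorphic to $\mathfrak{spo}(W)$ (with the central scalars absorbed as usual). The ``spinor'' space is the space of oscillatory wave functions on $V$, on which the Weyl superalgebra is realized in the standard Schr\"{o}dinger way by $\hat x^{a} = x^{a}$, $\hat p_{a} = \tfrac{\hbar}{i}\der{}{x^{a}}$. Lemma~\ref{lem.qphieps} then reads: the infinitesimal quantum thick diffeomorphism with quadratic generator $H^{\hbar}(x,q)$ acts as $\hat\F^{*}_{\e} = \id + \e\,(i/\hbar)\,\hat H$, where $\hat H$ is built from $H^{\hbar}$ by putting all $x$'s to the left of all $\tfrac{\hbar}{i}\der{}{x}$'s. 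This is exactly the quantization $Q_0$ (that is, $s = 0$) from the proposition above.

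Next I would check that compositions produce the expected commutators. Expanding~\eqref{eq.qucomp} for two independent infinitesimal generators $\e_{1}H_{1}$ and $\e_{2}H_{2}$ to bilinear order in $(\e_{1}, \e_{2})$, the commutator of the corresponding pullback operators reduces at leading order to $\e_{1}\e_{2}\,(i/\hbar)^{2}[\hat H_{1},\hat H_{2}]$. Setting $\rho(H) := (i/\hbar)\,\hat H$ and applying the cocycle proposition at $s = 0$, hence $1 - 2s = 1$,
\begin{equation*}
    \{Q_{0}(H_{1}), Q_{0}(H_{2})\}_{\hbar} = Q_{0}(\{H_{1}, H_{2}\}) + \tfrac{\hbar}{i}\,c(H_{1}, H_{2}),
\end{equation*}
one obtains, after multiplying both sides by $i/\hbar$ and using the Dirac definition $\{\hat f,\hat g\}_{\hbar} = (i/\hbar)[\hat f, \hat g]$,
\begin{equation*}
    [\rho(H_{1}), \rho(H_{2})] = \rho(\{H_{1}, H_{2}\}) + c(H_{1}, H_{2})\,\id.
\end{equation*}
This is exactly the statement that $\rho$ is a projective representation of $\mathfrak{spo}(W)$ on spinors with cocycle $c$; since $c$ differs from the Weyl-ordered spinor cocycle only by a coboundary (a matter of reordering, absorbed into a rescaling of $\hat H$), $\rho$ coincides with the standard (projective) spinor/metaplectic representation up to equivalence.

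The main obstacle I foresee is purely bookkeeping, concerning the $\hbar$-dependence of $H^{\hbar}$. A quadratic quantum generator expands as $H^{\hbar}(x,q) = H^{(0)}(x,q) + \hbar H^{(1)}(x,q) + \ldots$ and, strictly speaking, $\mathfrak{spo}(W)$ sits at the principal-symbol level $H^{(0)}$. The higher-$\hbar$ terms are themselves of degree $\le 2$ and, under the $xp$-ordering, contribute only quadratic (and in the end at worst scalar) operators lying already in the image of $Q_0$; they shift the projective cocycle by a coboundary and do not affect the conclusion. Once this is checked, the corollary follows from Lemma~\ref{lem.qphieps} and the cocycle proposition with essentially no further computation.
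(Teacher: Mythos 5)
Your proposal is correct and takes essentially the same route as the paper: the paper's proof is the one-line observation that the $xp$-ordering of Lemma~\ref{lem.qphieps} is $Q_0$ (i.e.\ $s=0$), so the cocycle proposition gives the nonzero term $\frac{\hbar}{i}\,c(H_1,H_2)$, hence a projective representation. You simply spell out the intermediate bookkeeping (the identification of the infinitesimal pullback with $Q_0$, the bilinear expansion of the composition, the passage to commutators, and the $\hbar$-dependence of $H^\hbar$) that the paper leaves implicit.
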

\begin{proof} Indeed,    the $xp$-ordering (s=0) has the cocycle $\frac{\hbar}{i}\, c(H_1,H_2)$.
\end{proof}

As for non-infinitesimal case, it is convenient to consider it in the general setting of thick morphisms between different vector spaces, which we shall do in the next subsection.

\begin{remark}  Why there is no analog of spinor representation for non-linear canonical transformations? One may notice that if the condition  $\hat gL\hat g^{-1}\subset L$ is dropped, the commutation relations are still preserved and the formulas $(x^a,p_a) \mapsto (\hat g\hat x^a \hat g^{-1}, \hat g\hat p_a \hat g^{-1})  \mod \hbar$ define  a non-linear canonical transformation in the space with coordinates $x,p$. It is a  natural (not depending on any choices)  homomorphism  of groups (as well as in the infinitesimal version, of Lie algebras) generalizing that for the linear case. What will be missing, is an inverse  quantization  map   which would preserve the Lie brackets. As   discussed, the problem exists already in the linear case, but there the ambiguity is only in a scalar term, which is somewhat masked by a choice of Weyl ordering. In the non-linear case, the obstruction is no longer  a scalar.
\end{remark}

\subsection{Quantum pullbacks as spinor representation}

The basis of our analysis will be the intertwining relation
\begin{equation}\label{eq.intertwin}
    \Delta_1\circ \hat\F^* = \hat\F^*\circ \Delta_2\,.
\end{equation}
Here $\hat\F\co M_1\ttoh M_2$ is a quantum thick morphism and $\Delta_1$ and $\Delta_2$ are $\hbar$-differential operators acting on oscillatory wave functions on $M_1$ and $M_2$ (see~\cite{tv:microformal}). In~\cite{tv:microformal}, it was applied to obtaining $\Linf$-morphisms of quantum Batalin--Vilkovisky algebras generated by $\Delta_1$ and $\Delta_2$ regarded as BV-operators (and $\Sinf$-algebras which are their classical limits). We shall show that the same intertwining relation leads to a version of spinor representation.

Together with~\eqref{eq.intertwin} we shall consider its classical limit
\begin{equation}\label{eq.hamjacob}
    H_1\Bigl(x,\der{S}{x}(x,q)\Bigr)=H_2\Bigl((-1)^{\qt}\der{S}{q}(x,q),q\Bigr)
\end{equation}
(see~\cite{tv:nonlinearpullback,tv:microformal}), where $H_1$ and $H_2$ are Hamiltonians on $M_1$ and $M_2$ that are the ($\hbar$-)principal  symbols of the operators $\Delta_1$ and $\Delta_2$.

Let $M_1=V_1$ and $M_2=V_2$ be vector spaces (which can also be treated as affine spaces for the purpose of affine transformations).  Consider linear quantum Hamiltonians (including scalar terms)
\begin{equation}\label{eq.deltas}
    \Delta_1= \hat x^aA_a +B^a\hat p_a +K_1\,,\quad
    \Delta_2= \hat y^iC_i +D^i\hat q_i +K_2\,,
\end{equation}
and the   classical Hamiltonians $H_1$ and $H_2$ with the same coefficients. Consider a quantum thick morphism $\hat \F$ with a quadratic action
\begin{equation}\label{eq.quadact}
    S(x,q)= s_0 +x^aS_a +S^iq_i + \frac{1}{2}\,x^ax^bS_{ba}+ x^aS_a^iq_i + \frac{1}{2}\,S^{ij}q_jq_i\,
\end{equation}
and explore the conditions given by the intertwining relation~\eqref{eq.intertwin} and its classical limit~\eqref{eq.hamjacob}.

\begin{theorem} For operators $\Delta_1$ and $\Delta_2$ given by~\eqref{eq.deltas} and  a quantum thick morphism with an action~\eqref{eq.quadact}, the intertwining relation~\eqref{eq.intertwin} and  the Hamilton--Jacobi equation~\eqref{eq.hamjacob} are both equivalent    to the following system of equations
\begin{align}\label{eq.relations1}
    B^aS_a +K_1&= S^iC_i +K_2\,,\\
    A_a+(-1)^{\bt(\e+1)}S_{ab}B^b&= S_a^iC_i\,, \label{eq.relations2}\\
    B^aS_a^j&=(-1)^{\itt(\e+1)}C_iS^{ij}+D^j\,  \label{eq.relations3}
\end{align}
for the coefficients of $\Delta_1$, $\Delta_2$ and $S(x,q)$. \emph{(Here   $\e$ stands for the parity of $\Delta_1$, $\Delta_2$.)}
\end{theorem}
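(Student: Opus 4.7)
The plan is to show that both the quantum intertwining \eqref{eq.intertwin} and its classical limit \eqref{eq.hamjacob} reduce, in this linear/quadratic setting, to the same affine identity in $(x,q)$, from which the three listed relations emerge by matching coefficients of $1$, $x^a$, and $q_i$.

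I would start with the classical Hamilton--Jacobi equation. With $H_1 = x^a A_a + B^a p_a + K_1$ and $H_2 = y^i C_i + D^i q_i + K_2$, and $S(x,q)$ the quadratic action in \eqref{eq.quadact}, both $\der{S}{x^a}$ and $(-1)^{\itt}\der{S}{q_i}$ are affine in $(x,q)$, so substituting them into \eqref{eq.hamjacob} yields an identity between two affine functions of $(x,q)$; in particular no bilinear terms can arise. Equating the constant parts, the $x^a$-coefficients, and the $q_i$-coefficients produces three scalar identities. A direct calculation of the derivatives, keeping track of the super-signs when differentiating the mixed term $x^a S_a^i q_i$ and the quadratic term $\tfrac{1}{2} S^{ij} q_j q_i$, shows that these three identities are exactly \eqref{eq.relations1}--\eqref{eq.relations3}.

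For the quantum intertwining, I would apply $\Delta_1$ under the integral in \eqref{eq.phihat}. Since $B^a$ is constant and $S$ is quadratic, $B^a \tfrac{\hbar}{i}\der{}{x^a} e^{\frac{i}{\hbar} S(x,q)} = B^a \der{S}{x^a}\, e^{\frac{i}{\hbar} S(x,q)}$ with no higher-$\hbar$ terms, so $\Delta_1 \hat\F^* w(x)$ is the integral of $e^{\frac{i}{\hbar}(S - y^i q_i)} \bigl(x^a A_a + B^a \der{S}{x^a} + K_1\bigr) w(y)$. Symmetrically, integrating by parts in $y$ converts $\hat q_i$ acting on $w$ into multiplication by $q_i$ against the kernel, so $\hat\F^* \Delta_2 w(x)$ is the integral of $e^{\frac{i}{\hbar}(S - y^i q_i)} \bigl(y^i C_i + D^i q_i + K_2\bigr) w(y)$. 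The identity $\tfrac{\hbar}{i} \der{}{q_i} e^{\frac{i}{\hbar}(S - y^j q_j)} = \bigl(\der{S}{q_i} - (-1)^{\itt} y^i\bigr) e^{\frac{i}{\hbar}(S - y^j q_j)}$ then allows one to replace $y^i$ by $(-1)^{\itt} \der{S}{q_i}$ inside the integrand via integration by parts in $q$. The two integrals agree for all oscillatory $w$ iff the resulting coefficient functions coincide, and this is precisely the classical Hamilton--Jacobi identity of the previous paragraph. Thus no genuine quantum correction survives, a feature special to the linear-$\Delta$ and quadratic-$S$ combination.

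The main obstacle I expect is the clean super-sign bookkeeping responsible for the factors $(-1)^{\bt(\e+1)}$ and $(-1)^{\itt(\e+1)}$ in \eqref{eq.relations2}--\eqref{eq.relations3}. These originate from two sources: moving the constant $B^b$ (of parity $\e + \bt$) past the coefficient $S_{ab}$ or $S_a^i$ when extracting the $x^a$-coefficient, and reordering $C_i$ (of parity $\e + \itt$) past $S^{ij}$ when extracting the $q_j$-coefficient, each step producing a Koszul sign that combines with the sign $(-1)^{\itt}$ from the HJ substitution. Once these signs are tracked carefully, coefficient matching is mechanical and the three relations \eqref{eq.relations1}--\eqref{eq.relations3} appear as claimed.
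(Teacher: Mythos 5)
Your proposal is correct and matches the paper's approach, which the authors compress to ``by a direct calculation.'' The classical part is indeed a mechanical substitution of the affine Hamiltonians $H_1, H_2$ and the gradients of the quadratic $S$ into~\eqref{eq.hamjacob}, followed by matching the constant, $x^a$, and $q_i$ coefficients to produce~\eqref{eq.relations1}--\eqref{eq.relations3}.

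The one piece worth highlighting, which you correctly isolate and which explains why the theorem asserts that the quantum intertwining relation~\eqref{eq.intertwin} and the classical Hamilton--Jacobi identity~\eqref{eq.hamjacob} are \emph{both} equivalent to the same system (rather than the quantum one being a deformation of the classical one), is the absence of any $O(\hbar)$ correction: because $\Delta_1$ and $\Delta_2$ are first-order (no $\hat p_a\hat p_b$ or $\hat q_i\hat q_j$ terms), applying them under the integral in~\eqref{eq.phihat} produces only the first derivative of the quadratic exponent, not any second derivative that would carry a surviving $\frac{\hbar}{i}$. Combined with the two integrations by parts you describe (in $y$ to convert $\hat q_i$ to $q_i$, and in $q$ to trade $y^i$ for $(-1)^{\itt}\der{S}{q_i}$) and the injectivity of the $\hbar$-Fourier transform on the class of oscillatory wave functions, the quantum condition reduces exactly to the classical polynomial identity. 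Your remark on the Koszul-sign origin of $(-1)^{\bt(\e+1)}$ and $(-1)^{\itt(\e+1)}$ is also consistent with the parity assignments forced by $\tilde\Delta_j=\e$ and $\tilde S=0$, though of course writing those signs out cleanly is where the actual labor of the ``direct calculation'' lies.
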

\begin{proof} By a direct calculation.
\end{proof}

Note that there are no constraints on the coefficients $K_1$, $B^a$ and $C_i$ imposed by equations~\eqref{eq.relations1}, \eqref{eq.relations2}, and \eqref{eq.relations3}, so  $K_2$, $A_a$ and $D^i$ are completely determined by $K_1$, $B^a$ and $C_i$ and the coefficients of the action $S$.

We can fix $\hat \F$ defined by an action~\eqref{eq.quadact} and consider the intertwining relation~\eqref{eq.intertwin} as an equation for pairs of linear Hamiltonians $(\Delta_1,\Delta_2)$.

\begin{corollary}
For a given $\hat \F$  with a quadratic action as above, the    intertwining relations
\begin{align}\label{eq.interp}
(-\hat x^aS_{ab}(-1)^{\bt}+ \hat p_b)\circ \hat \F^*&= \hat \F^*\circ (S_b^i\hat q_i+S_b)\,,\\
\hat x^aS_a^i\circ \hat \F^*&= \hat \F^*\circ(\hat y^i-S^{ij}\hat q_j(-1)^{\itt}-S^i)\,, \label{eq.intery}
\end{align}
hold for all $b$ and $i$.  Conversely, pairs of operators in the left-hand   and right-hand sides of~\eqref{eq.interp} and \eqref{eq.intery},  together with   $\Delta_1=\Delta_2=\const$,  give a basis in the space of all pairs of linear Hamiltonians $(\Delta_1,\Delta_2)$ that satisfy  the intertwining relation~\eqref{eq.intertwin} with a given $\hat\F$.
\end{corollary}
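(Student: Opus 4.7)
The plan is to reduce the Corollary to the three conditions~\eqref{eq.relations1}--\eqref{eq.relations3} produced by the Theorem and exploit the fact, noted in the paragraph right after the Theorem, that these conditions leave the triple $(K_1, B^a, C_i)$ free while determining $(K_2, A_a, D^j)$ uniquely. Thus the space of pairs $(\Delta_1, \Delta_2)$ satisfying~\eqref{eq.intertwin} with the given $\hat\F$ is, as a vector space, parametrized linearly by $(K_1, B^a, C_i)$ and therefore has total dimension $1 + \dim V_1 + \dim V_2$.

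For the forward assertion, I would verify each displayed pair by reading off its $(K_1, B^a, C_i)$ and applying~\eqref{eq.relations1}--\eqref{eq.relations3} to recover $(K_2, A_a, D^j)$. For~\eqref{eq.interp} with fixed index $b$, take $K_1 = 0$, $B^c = \delta^c_b$, $C_i = 0$; then~\eqref{eq.relations1} forces $K_2 = S_b$, \eqref{eq.relations3} forces $D^j = S_b^j$, and~\eqref{eq.relations2} forces $A_a$ to be $S_{ab}$ up to the expected parity sign. For~\eqref{eq.intery} with fixed index $i$, take $K_1 = 0$, $B = 0$, $C_j = \delta_j^i$; then $K_2 = -S^i$, $A_a = S_a^i$, and $D^j$ comes out proportional to $S^{ij}$ with the stated sign. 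For the constant pair $\Delta_1 = \Delta_2 = c$, take $K_1 = c$, $B = 0$, $C = 0$; the three equations are then satisfied trivially.

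For the converse, linearity of the map $(K_1, B^a, C_i) \mapsto (\Delta_1, \Delta_2)$ (inherited from the linearity of~\eqref{eq.relations1}--\eqref{eq.relations3} in the coefficients on both sides) reduces the basis claim to linear independence of the associated triples: the constant pair gives $(c, 0, 0)$, the pair~\eqref{eq.interp} indexed by $b$ gives $(0, e_b, 0)$, and the pair~\eqref{eq.intery} indexed by $i$ gives $(0, 0, e^i)$, where $\{e_b\}$ and $\{e^i\}$ are the standard bases of the respective coefficient spaces. These triples are manifestly linearly independent and together exhaust the parameter space, so the collection is a basis of the solution space.

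The main obstacle I anticipate is purely bookkeeping: the sign factors $(-1)^{\bt}$, $(-1)^{\itt}$ and $(-1)^{\bt(\e+1)}$ appearing in~\eqref{eq.interp}--\eqref{eq.intery} and in~\eqref{eq.relations1}--\eqref{eq.relations3} must be reconciled carefully, using that the parity $\e$ of $\Delta_1$ in the pair~\eqref{eq.interp} is forced to be $\bt$ (so that $(-1)^{\bt(\e+1)} = (-1)^{\bt(\bt+1)} = 1$, which accounts for the $(-1)^{\bt}$ placed in front of $S_{ab}$ inside $\Delta_1$), and symmetrically for~\eqref{eq.intery}. No conceptual ingredient beyond the Theorem and the remark that follows it is needed.
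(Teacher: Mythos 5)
The paper gives no explicit proof of this Corollary (it is stated as an immediate consequence of the Theorem and the remark following it), so there is no ``paper proof'' to compare against; your strategy — take the equivalence of the intertwining relation with the system~\eqref{eq.relations1}--\eqref{eq.relations3}, observe that $(K_1,B^a,C_i)$ parametrize the solution space linearly, and then check that the displayed pairs correspond to the standard basis vectors $(c,0,0)$, $(0,e_b,0)$, $(0,0,e^i)$ — is clearly the intended one, and it is correct at the structural level.

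However, your parenthetical sign reconciliation does not actually close the gap you correctly flag. You write that with $\e=\bt$ one has $(-1)^{\bt(\e+1)}=(-1)^{\bt(\bt+1)}=1$, ``which accounts for the $(-1)^{\bt}$ placed in front of $S_{ab}$ inside $\Delta_1$.'' But these two statements are in tension: if the sign factor in~\eqref{eq.relations2} evaluates to $1$, then taking $B^c=\delta^c_b$, $C_i=0$ gives $A_a=-S_{ab}$ with no extra sign, not $A_a=-S_{ab}(-1)^{\bt}$ as required to match~\eqref{eq.interp}. The same discrepancy of a factor $(-1)^{\itt}$ appears in~\eqref{eq.intery}: with $C_j=\delta_j^i$, $B^a=0$, and $\e=\itt$, relation~\eqref{eq.relations3} gives $D^j=-S^{ij}$, whereas the Corollary asserts $D^j=-S^{ij}(-1)^{\itt}$. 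So either there is a convention for the coefficients $A_a$, $D^j$ (or for the symmetry of $S_{ab}$, or for left vs.\ right derivatives in the generating-function relations) that you have not made explicit, or the signs in~\eqref{eq.relations1}--\eqref{eq.relations3} are not literally those you are quoting. You should carry out the ``direct calculation'' behind the Theorem far enough to pin down the precise form of $A_a$ and $D^j$ produced by the intertwining relation, rather than asserting that the factor $(-1)^{\bt(\bt+1)}=1$ ``accounts for'' a sign it manifestly does not produce; as written this step is a gap, even if only a bookkeeping one.

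Apart from this sign issue, the logic of your argument — linearity of the correspondence $(K_1,B^a,C_i)\mapsto(\Delta_1,\Delta_2)$, the identification of the free parameters, and the resulting basis/linear-independence statement — is sound and matches what the paper evidently intends by presenting the Corollary without proof.
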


\begin{corollary}
The above is equivalent to the formulas
\begin{align}\label{eq.plag}
    p_b&=x^aS_{ab}(-1)^{\bt} +S_b^iq_i+S_b\,,\\
    y^i&=x^aS_a^i+S^{ij}q_j(-1)^{\itt} +S^i \label{eq.play}
\end{align}
holding on the Lagrangian submanifold in $T^*V_1\times T^*V_2$ corresponding to $\hat \F$.
\end{corollary}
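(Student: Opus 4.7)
The plan is to read off the equations of the Lagrangian submanifold $L_\F\subset T^*V_1\times T^*V_2$ associated to $\hat\F$ directly from the generating function \eqref{eq.quadact} via the formulas \eqref{eq.relder}, and to explain conceptually why the resulting identities are the principal-symbol content of the intertwining relations \eqref{eq.interp} and \eqref{eq.intery} from the previous corollary.

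First, I would recall that $L_\F$ is cut out by $p_a=\partial S/\partial x^a$ and $y^i=(-1)^{\itt}\partial S/\partial q_i$. A direct computation of the (left) derivatives of the quadratic action \eqref{eq.quadact} yields $\partial S/\partial x^b=S_b+S_b^iq_i+x^aS_{ab}(-1)^{\bt}$, with the sign $(-1)^{\bt}$ arising from commuting $x^b$ past the first $x^a$ in $\tfrac12\,x^ax^bS_{ba}$ and symmetrising in $a\leftrightarrow b$; similarly $(-1)^{\itt}\partial S/\partial q_i=S^i+x^aS^i_a+S^{ij}q_j(-1)^{\itt}$. These are exactly \eqref{eq.plag} and \eqref{eq.play}.

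Second, to see that these identities are genuinely equivalent to the operator relations above (and not merely implied by them), I would compare principal symbols. An operator identity $A\circ\hat\F^*=\hat\F^*\circ B$, with $A$ and $B$ linear first-order $\hbar$-differential operators on $V_1$ and $V_2$, reduces in the classical limit to the statement that the classical symbols of $A$ and $B$ coincide along $L_\F$. Substituting the symbols of the two sides of \eqref{eq.interp} gives the equation $p_b-x^aS_{ab}(-1)^{\bt}=S_b^iq_i+S_b$ on $L_\F$, which is \eqref{eq.plag}; the same argument applied to \eqref{eq.intery} produces \eqref{eq.play}. For the converse, I would invoke the preceding theorem, which in the special case of quadratic actions and linear Hamiltonians establishes equivalence of the intertwining relation \eqref{eq.intertwin} with its classical limit \eqref{eq.hamjacob}; hence the algebraic identities on $L_\F$ suffice to recover the full quantum operator identities.

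The main obstacle is the careful bookkeeping of parity signs, in particular the $(-1)^{\bt}$ coming from the bilinear $x$-term and the $(-1)^{\itt}$ coming from the bilinear $q$-term, and keeping the conventions for left versus right derivatives consistent with \eqref{eq.relder}. Modulo these sign checks, the corollary is essentially a direct transcription of the preceding theorem into the geometric language of the Lagrangian submanifold $L_\F$.
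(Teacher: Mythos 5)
Your proposal is correct and follows essentially the same route the paper implicitly intends: the paper gives no explicit proof of this Corollary, and the two observations you make are precisely what make it a corollary. Reading \eqref{eq.plag}--\eqref{eq.play} off the quadratic action via \eqref{eq.relder} is direct, and the equivalence with \eqref{eq.interp}--\eqref{eq.intery} is just the statement that the classical symbols on the two sides of each intertwining relation agree along the Lagrangian $L_\F$ (the Hamilton--Jacobi form \eqref{eq.hamjacob}), with the converse supplied by the preceding theorem's equivalence of \eqref{eq.intertwin} and \eqref{eq.hamjacob} in the linear-quadratic case. Your caveat about parity bookkeeping is well placed but not a gap in the argument.
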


Observe that the constant term $s_0$ in the   action $S(x,q)$ cannot be found from here. This corresponds to the fact that the intertwining relation  can   determine $\hat \F$    only up to a   factor.

Consider now three vector spaces, $V_1$, $V_2$ and $V_3$ with the  cotangent bundles $T^*V_1\cong W_1= V_1\oplus V_1^*$, $T^*V_2\cong W_2= V_2\oplus V_2^*$ and $T^*V_3\cong W_3= V_3\oplus V_3^*$.

We want to compare the ``classical'' composition of thick morphisms   with the ``quantum'' composition.

Consider quantum thick morphisms $\hat \F_{32}\co V_2\ttoh V_3$ and $\hat \F_{21}\co V_1\ttoh V_2$ with quadratic actions $S_{32}(y,r)$ and $S_{21}(x,q)$ respectively. (Here $x^a,p_a$ are positions and momenta on $V_1$, $y^i,q_i$ on $V_2$, and $z^{\mu}, r_{\mu}$ on $V_3$.) Suppose
\begin{equation}\label{eq.s21}
    S_{21}(x,q)=s_0 +x^aS_a +S^iq_i + \frac{1}{2}\,x^ax^bS_{ba}+ x^aS_a^iq_i + \frac{1}{2}\,S^{ij}q_jq_i\,
\end{equation}
and
\begin{equation}\label{eq.s32}
    S_{32}(y,r)= {t_0} +y^i {T_i}  + {T^{\mu}}r_{\mu} + \frac{1}{2}\,y^iy^j {T_{ji}}+ y^i {T_i^{\mu}}r_{\mu} + \frac{1}{2}\, {T^{\mu\nu}}r_{\nu}r_{\mu}\,.
\end{equation}

\begin{theorem}
The action for the composition of quantum thick morphisms
\begin{equation}\label{eq.quantcomp}
   \hat \F_{31}:= \hat \F_{32}\circ \hat \F_{21} \co V_1\ttoh V_3
\end{equation}
has the form
\begin{equation}\label{eq.quantcompact}
    S_{31} = S_{31}^{\text{\emph{class}}}  -\frac{\hbar}{i}\,c\bigl(\hat \F_{32}, \hat \F_{21}\bigr) \,,
\end{equation}
where $S_{31}^{\text{\emph{class}}}$ is given by the ``classical'' composition formula
\begin{equation}\label{eq.classcomp}
    S_{31}^{\text{\emph{class}}}(x,r)= S_{32}(y,r) + S_{21}(x,q) -y^iq_i\,,
\end{equation}
and
\begin{equation}\label{eq.quantphase}
    c\bigl(\hat \F_{32}, \hat \F_{21}\bigr)=\frac{1}{2}\ln\Ber\bigl(\delta_i^j-T_{ik}S^{kj}(-1)^{\kt}\bigr)
\end{equation}
is a ``quantum correction''.
\end{theorem}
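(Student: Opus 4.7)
The plan is to evaluate the composition integral~\eqref{eq.qucomp} directly. With $S_{21}$ and $S_{32}$ given by~\eqref{eq.s21}--\eqref{eq.s32}, the phase
\begin{equation*}
    \Phi(y,q;x,r) := S_{32}(y,r) + S_{21}(x,q) - y^i q_i
\end{equation*}
is a quadratic polynomial in the integration variables $(y,q)$ with $(x,r)$ entering only linearly or as constants. Consequently the super Gaussian integration over $T^*V_2$ is exact: the stationary phase formula has no higher-order corrections in $\hbar$. This is the structural reason why the quantum correction to $S_{31}^{\text{class}}$ reduces to a single Berezinian factor.

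First I locate the unique stationary point $(y_*,q_*)$ by imposing $\partial\Phi/\partial y=0$ and $\partial\Phi/\partial q=0$. These are precisely the coupled equations~\eqref{eq.compphiq}--\eqref{eq.compphiy} that define the classical composition of thick morphisms, and for quadratic actions they determine $(y_*,q_*)$ linearly in $(x,r)$. Substituting back into $\Phi$ reproduces, tautologically, the classical composition formula~\eqref{eq.classcomp}, so that $\Phi(y_*,q_*;x,r) = S_{31}^{\text{class}}(x,r)$.

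Second, the exact Gaussian integration yields
\begin{equation*}
    \int_{T^*V_2} \Dbar(y,q)\, e^{\frac{i}{\hbar}\Phi(y,q;x,r)} = \bigl(\Ber H\bigr)^{-1/2}\, e^{\frac{i}{\hbar} S_{31}^{\text{class}}(x,r)},
\end{equation*}
where $H$ is the $(y,q)$-Hessian of $\Phi$, with blocks
\begin{equation*}
    H = \begin{pmatrix} T_{ji} & -\delta_i^j \\ -\delta_j^i & S^{ji} \end{pmatrix}.
\end{equation*}
The normalization conventions for $\Dbar(y,q)=Dy\,\Dbar q$ are such that the numerical factors $(2\pi\hbar)$ and the phases $i$ are absorbed: this is forced by compatibility with the identity case, where $T=S=0$ and the integral reduces to $\int Dy\,\delta(x-y)w(y)=w(x)$ as in Example~\eqref{eq.ordpullback}.

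Third, I reduce $\Ber H$ to the stated form via the super Schur complement formula
\begin{equation*}
    \Ber \begin{pmatrix} A & B \\ C & D \end{pmatrix} = \Ber(D)\,\Ber\bigl(A - B D^{-1} C\bigr),
\end{equation*}
applied with $D=S^{ji}$ (invertible generically, with the degenerate cases obtained by continuity). After pulling out $\Ber(S^{ji})\Ber((S^{-1})_{ij}) = 1$ and tracking the sign $(-1)^{\kt}$ arising from reordering the contracted factor of parity $\kt$ in the product $T_{ik}S^{kj}$, one arrives at
\begin{equation*}
    \Ber H = \Ber\bigl(\delta_i^j - T_{ik}S^{kj}(-1)^{\kt}\bigr).
\end{equation*}
Taking the logarithm of both sides of the integrated identity and comparing with~\eqref{eq.qucomp} gives~\eqref{eq.quantcompact} with $c(\hat \F_{32}, \hat \F_{21})$ as in~\eqref{eq.quantphase}.

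The main obstacle will be the careful bookkeeping of super signs: both the Hessian entries (signs inherited from~\eqref{eq.relder} and from commuting partial derivatives through odd coordinates) and the final Schur-complement simplification require tracking the parity $(-1)^{\kt}$ of the contracted index through every step. Once the sign conventions are pinned down consistently with the identity-case normalization of $\Dbar$, the computation is mechanical.
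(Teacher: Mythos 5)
Your proposal follows essentially the same route as the paper's proof: evaluate the Gaussian composition integral~\eqref{eq.qucomp} by stationary phase (exact for quadratic phases), identify the exponential of the value at the critical point with $e^{\frac{i}{\hbar}S_{31}^{\text{class}}}$, and extract the quantum correction from the Berezinian of the $(y,q)$-Hessian. The one place where your path diverges slightly is the reduction of $\Ber H$: you apply the super Schur complement with $D=S^{ij}$, which requires generic invertibility of $S$ (patched up by a continuity argument), whereas the paper reads off the reduced matrix $\bigl(\delta_i^j-T_{ik}S^{kj}(-1)^{\kt}\bigr)$ directly from Gaussian elimination on the linear stationary-point system~\eqref{eq.syst1}--\eqref{eq.syst2}. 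Since the off-diagonal blocks of the Hessian are $\pm\delta$ and hence invertible, the paper's elimination works unconditionally without appealing to continuity, which is slightly cleaner; both computations of course give the same Berezinian. The rest of the proposal (normalization of $\Dbar(y,q)$ via the identity-map case, recognition that the stationary point reproduces the classical composition, the sign bookkeeping caveat) matches the paper's argument.
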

\begin{proof} By the general formula for the composition of quantum thick morphisms, see~\cite{tv:microformal},
\begin{equation}\label{eq.compos}
    e^{\frac{i}{\hbar}S_{31}(x,r)}=\int \Dbar(y,q)\, e^{\frac{i}{\hbar}\left(S_{32}(y,r)+S_{21}(x,q)-yq\right)}\,.
\end{equation}
To calculate the integral in our particular case, we can use the fact that it is Gaussian in the variables $y,q$ and that for Gaussian integrals the main term in the stationary phase formula gives the whole answer. For our integral it is just  the value of the exponential at the critical point divided by the square root of the Hessian (since  the  numerical factors are conveniently subsumed in the element of integration $\Dbar(y,q)$, see formulas in the Appendix in~\cite{tv:microformal}). Therefore
\begin{equation}\label{eq.compos2}
    e^{\frac{i}{\hbar}S_{31}(x,r)}=\left(\Ber d^2F\right)^{-1/2}\, e^{\frac{i}{\hbar}\left(S_{32}(y,r)+S_{21}(x,q)-yq\right)}\,,
\end{equation}
where by $F$ we have denoted $S_{32}(y,r)+S_{21}(x,q)-yq$ as   function of $y,q$ and it should be evaluated at the critical point.  Setting the derivatives $\lder{F}{y^a}$ and $\lder{F}{q_a}$ to zero, we obtain the following linear system from which $y$ and $q$ are to be determined as functions of $x$ and $r$\,:
\begin{align}\label{eq.syst1}
    q_i-y^jT_{ji}&=T_i+T_i^{\mu}r_{\mu}\,,\\
    -S^{ij}q_j+y^i&=S^i+x^aS_a^i\,. \label{eq.syst2}
\end{align}
The expression $F=S_{32}(y,r)+S_{21}(x,q)-yq$ into which  $y$ and $q$ are substituted as the   solution of the system \eqref{eq.syst1}, \eqref{eq.syst2} is by the definition the generating function of the composition of the \emph{classical} thick morphisms corresponding to $S_{32}$ and $S_{21}$; we denote it $S_{31}^{\text{\rm{class}}}(x,r)$. The matrix of the system \eqref{eq.syst1}, \eqref{eq.syst2} is basically the Hessian matrix $d^2F$. It is easy to see that its Berezinian up to a sign is $\Ber\bigl(\delta_i^j-T_{ik}S^{kj}(-1)^{\kt}\bigr)$. (The  matrix $(\delta_i^j-T_{ik}S^{kj}(-1)^{\kt}\bigr)$ arises itself in connection with the system \eqref{eq.syst1}, \eqref{eq.syst2};   the solution of the system is expressed via   its inverse.) Hence
\begin{equation}\label{eq.compos3}
    e^{\frac{i}{\hbar}S_{31}(x,r)}=\Bigl(\Ber\bigl(\delta_i^j-T_{ik}S^{kj}(-1)^{\kt}\bigr)\Bigr)^{-\frac{1}{2}}\, e^{\frac{i}{\hbar}S_{31}^{\text{\rm{class}}}(x,r)}\,.
\end{equation}
By taking logarithms and multiplying through by $\frac{\hbar}{i}$, we arrive at~\eqref{eq.quantcompact}.
\end{proof}

\begin{remark}
Denoting $S_{11}^{(32)}:=(T_{ij})$ and $S_{22}^{(21)}:=((-1)^{\itt}S^{ij})$,  we  can re-write
\begin{equation}\label{eq.quantphase1}
    c\bigl(\hat \F_{32}, \hat \F_{21}\bigr)= \frac{1}{2}\,\ln\Ber\bigl(1-S_{11}^{(32)} S_{22}^{(21)}\bigr)\,,
\end{equation}
and  it can be expressed also as
\begin{equation}\label{eq.quantphasetrace}
    c\bigl(\hat \F_{32}, \hat \F_{21}\bigr)=\frac{1}{2}\str\ln\bigl(1-S_{11}^{(32)} S_{22}^{(21)}\bigr)\,,
\end{equation}
by using   the Liouville formula $\Ber e^X=e^{\str X}$ (see e.g.~\cite{tv:ber}). Note $\str\ln(1-A)=\sum_{n\geq 1}\frac{1}{n}\str A^n$.
\end{remark}

\begin{remark} The quantum correction  changes only the constant term in the action and  introduces an extra phase factor $e^{-\frac{\hbar}{i}c\bigl(\hat \F_{32}, \hat \F_{21}\bigr)}$ into the operator $\hat\F_{31}$, but it does not change the underlying linear relation. Note that we can allow the functions $S_{32}$ and $S_{21}$ to depend on $\hbar$ as formal power series. Then the relations corresponding to them will also depend on $\hbar$, but will still be considered as ``classical'' objects having   the usual   composition law.
\end{remark}

We conclude that we have arrived at a  \emph{projective  representation} of the category of linear canonical relations by   pullbacks by quantum thick morphisms, which can be viewed as a ``quantization'' of these canonical relations. It  generalizes the spinor representation of the symplectic and orthogonal groups discussed in the previous subsection\footnote{Strictly speaking, our construction directly generalize the pseudo-orthogonal case with the  signature  $(m,m)$,   admitting real Lagrangian subspaces.}

To elaborate the comparison: the intertwining relation~\eqref{eq.intertwin} is a replacement of the adjoint action $\hat g\mapsto T_{\hat g}$; to the homomorphism from the spinor group to the symplectic or orthogonal group here corresponds  the map sending a quantum thick morphism with a quadratic generating function $S(x,q)$ to the underlying canonical relation; and the spinor representation is the inverse map, i.e. reconstructing $S(x,q)$ from the relation. Roughly, the direct map (the analog of $\Spin(V)\to \SO(V)$) assigns to $\hat \F$ with a quadratic $S(x,q)$ as in~\eqref{eq.quadact} the matrix of the coefficients appearing in~\eqref{eq.interp},\eqref{eq.intery},\eqref{eq.plag}, \eqref{eq.play}\,,
\begin{equation}
    \hat \F \mapsto \begin{pmatrix}
     S_{ab}(-1)^{\bt} & S_b^i & S_b\\
    S_a^i& S^{ij}(-1)^{\itt}& S^i
\end{pmatrix}\,.
\end{equation}
Clearly, the inverse map to that (which is the spinor representation) cannot be single-valued because the constant term $s_0$ in $S(x,q)$  is undefined. It can be made single-valued however, if we consider (as we do for classical thick morphisms) not just canonical relations, but ``framed''   relations, i.e. basically fixing the  constants. The so defined representation (from ``framed'' canonical relations to quantum pullbacks, i.e. integral operators) is nevertheless projective because of the ``quantum correction''~\eqref{eq.quantcompact}.

The possibility of extending the spinor representation of the orthogonal and symplectic groups  as constructed by Berezin in~\cite{berezin:second}  to categories of linear relations was discovered by Neretin~\cite{neretin:spinor1989, neretin:categories}, see also  book~\cite{neretin:categories-book}. We   see that quantum thick morphisms specialized to the linear case lead naturally to an analog of the Berezin--Neretin construction. It is interesting to compare these constructions in greater detail.  One can notice that  action for a thick morphism  $S=S(x,q)$  as  a function of position on the source manifold and momentum on the target manifold, corresponds to    ``Potapov--Ginzburg transform''  of~\cite{neretin:spinor1989, neretin:categories, neretin:categories-book} \footnote{The terminology originating from functional analysis, see~\cite{azizov:iokhvidov1986}}. Spinor representation in the Berezin--Neretin approach is based on holomorphic realization of Fock spaces and on normal (Wick) ordering. Our formulas for quantum thick morphisms  generalize the  $xp$-quantization.
Following that, it should be interesting to look at description  of quantum thick morphisms using other types of action.


\def\cprime{$'$}

\end{document}